\setlist[2]{labelsep = .5\parindent}
\setlist[3]{labelsep = .5\parindent}
\newcommand{\IR}{\mathbb{R}}
\newcommand{\fg}{\mathfrak{g}}
\newcommand{\fU}{\mathfrak{U}}
\newcommand{\cH}{\mathcal{H}}
\newcommand{\cG}{\mathcal{G}}
\newcommand{\sfG}{\mathsf{G}}
\newcommand{\sfR}{\mathsf{R}} 
\newcommand{\set}[2]{\{#1 \colon #2 \}}
\newcommand{\inp}[2]{\langle #1, #2 \rangle}
\newcommand{\Id}{\text{\normalfont Id}}
\newcommand{\red}{\text{\normalfont red}}
\newcommand{\dom}{\text{\normalfont Dom}}
\newcommand{\Graph}{\text{\normalfont Graph}}
\newcommand{\Cs}{C$^{\ast}$-}
\renewcommand{\d}{\mathrm{d}}
\newcommand{\rmA}{\mathrm{A}}
\newcommand{\spc}[1]{ \mathcal K_{#1}} 
\newcommand{\A}{A} 
\newcommand{\R}{R} 
\newcommand{\Rp}{R^{\prime}} 
\newcommand{\Rr}{R^{\red}} 
\newcommand{\sfRr}{\sfR^{\red}}
\newenvironment{myenum}
{\begin{enumerate}[{(}1{)}]}
{\end{enumerate}}
\theoremstyle{plain}
\newtheorem{thrm}{Theorem}
\newtheorem{lem}[thrm]{Lemma}
\newtheorem{prop}[thrm]{Proposition}
\newtheorem{cor}[thrm]{Corollary}
\theoremstyle{definition}
\newtheorem{defi}[thrm]{Definition}
\newtheorem{ass}[thrm]{Assumption}
\newtheorem{rem}[thrm]{Remark}
\title{Quantum lattice gauge fields and groupoid \Cs algebras}
\author{Francesca Arici}
\author{Ruben Stienstra}
\author{Walter D. van Suijlekom}
\address{Institute for Mathematics, Astrophysics and Particle Physics, Radboud University Nijmegen, Heyendaalseweg 135, 6525 AJ Nijmegen, The Netherlands}
\email{f.arici@math.ru.nl}
\email{R.Stienstra@math.ru.nl} 
\email{waltervs@math.ru.nl}
\date{\today}
\begin{document}

\subjclass[2010]{22A22, 46L55, 46L60, 46L85, 70S15, 81T13, 81T25, 81T75}
\keywords{Groupoid \Cs algebras, gauge theories, lattice gauge theories.} 

\begin{abstract}
We present an operator-algebraic approach to the quantization and reduction of lattice field theories. Our approach uses groupoid \Cs algebras to describe the observables and exploits Rieffel induction to implement the quantum gauge symmetries. We introduce direct systems of Hilbert spaces and direct systems of (observable) \Cs algebras, and, dually, corresponding inverse systems of configuration spaces and (pair) groupoids. The continuum and thermodynamic limit of the theory can then be described by taking the corresponding limits, thereby keeping the duality between the Hilbert space and observable \Cs algebra on the one hand, and the configuration space and the pair groupoid on the other. Since all constructions are equivariant with respect to the gauge group, the reduction procedure applies in the limit as well.

\end{abstract}

\maketitle

\section{Introduction}

\noindent
Yang--Mills gauge theories were introduced to model fundamental physical forces such as the weak and strong interactions. Mathematically speaking, a classical gauge theory corresponds to connections on a principal fibre bundle over spacetime, whose structure group is a Lie group. Despite their great success in physics, quantizing such theories in a mathematically rigorous way turns out to be an extremely difficult problem.

In the 1970s, Wilson tried to simplify the problem by replacing spacetime with a finite 4D lattice \cite{wilson74}.
Since the number of points is now finite, one can rigorously define path integrals.
Moreover, the number of degrees of freedom related to the connection or gauge field is now a multiple of the number of edges of the lattice, and the finiteness of lattices also suppresses IR and UV divergences.
Wilson then tried to reconstruct the continuum theory by letting the lattice spacing tend to zero.
Soon after, Kogut and Susskind put Wilson's lattice gauge theories in the framework of Hamiltonian mechanics \cite{kogut75}, choosing a Cauchy surface in spacetime and replacing it with a 3D lattice, whilst retaining time as a continuous variable.

This formulation is particularly appealing, because quantization of Hamiltonian systems has been studied extensively in the mathematical physics literature. 
In addition, there is a well-developed theory of symmetries of such systems, initiated by Dirac in \cite{dirac50} and put into the language of symplectic manifolds by Arnold and Smale.
Marsden and Weinstein studied reduction of symplectic manifolds with respect to an equivariant moment map \cite{marsden74}, which allows one to remove gauge symmetries present in gauge theories in a systematic way.
Reduction of the corresponding quantum systems can be carried out by means of an induction procedure due to Rieffel \cite{rieffel74} (cf. \cite[IV.2]{landsman98}). 

In this paper, we give a novel operator algebraic approach to the quantization of Hamiltonian lattice gauge theories using groupoid \Cs algebras. We relate this to the work of Kijowski and Rudolph on finite-lattice approximation of Hamiltonain QCD \cite{KR02,kijowski04}. We discuss how gauge theories corresponding to `finer' lattices, or more generally, to graphs, are related to coarser ones. At the Hilbert space level this was described mathematically by Baez in \cite{baez96}, whose results we extend to the obervable algebras and Hamiltonians as well. In particular, we identify the groupoid describing the limit observable algebra as well. Hence this provides a framework for constructing the infinite volume and continuum limits of such theories. The first, also called the thermodynamic limit, has recently been studied along similar lines on a lattice in \cite{grundling13,GR15}, but without the groupoid description. On the other hand, it should be noted that we restrict ourselves to `pure gauge theories', i.e.\ we do not consider the interaction of gauge fields with matter fields, and that we only consider the free, `electric' part of such fields in our Hamiltonians because of their nice behaviour in passing to finer lattices.
The study of the system with interactions is much more involved, and subject of future research.

This paper is organized as follows.
In Section \ref{sec:class_quantum_sys} we review the classical Hamiltonian lattice gauge theory, and its quantum mechanical counterpart.
In Section \ref{sec:refinements}, we recall some old results and develop some new methods to relate lattices with different lattice spacings, as well as the corresponding classical and quantum systems, which is necessary for constructing the thermodynamic and continuum limits.
We also describe the behaviour of groupoid \Cs algebras associated to refinements of graphs.
In Section \ref{sec:limit} we describe the behaviour of the system with respect to the continuum limit and also identify the groupoid that describes the continuum limit.
We finish the paper with an outlook on the dynamics of the continuum limit observable algebra.

\subsection*{Acknowlegements}	 We would like to thank Johannes Aastrup, Sergio Doplicher, Klaas Landsman, Bram Mesland, Ralf Meyer, Jean Renault, Adam Rennie and Alexander Stottmeister for useful comments and helpful discussions. This research was partially supported by NWO under the VIDI-grant \mbox{016.133.326}. 

\section{The classical and quantum system}
\label{sec:class_quantum_sys}

\noindent
We first give a brief review of classical lattice gauge theories. Then we present the mathematical setup for the corresponding quantum system, along the lines of strict deformation quantization (cf. \cite{rieffel98} and \cite{landsman98}). We also describe the observable algebras as groupoid \Cs algebras and discuss reduction of the quantum system.

\subsection{The classical system}

Let $\pi \colon P \rightarrow M$ be a left principal fibre bundle over a smooth manifold $M$ with compact structure group $G$. Thus we shall assume throughout the rest of the text that (somewhat unconventionally) $G$ acts on $P$ from the left, and that, given a local trivialization $\Phi \colon \pi^{-1}(U) \rightarrow U \times G$ and a point $p \in \pi^{-1}(U)$ with $\Phi(p) = (x,a)$, we have
\begin{equation*}
\Phi(g \cdot p) = (x,g \cdot a), 
\end{equation*}
for each $g \in G$. 

Clearly, a change of local trivializations amounts to multiplying the element $a \in G$ in $\Phi(p) = (x,a)$ by some element $g_x \in G$; such a transformation is called a {\em local gauge transformation}. 

Now suppose that $P$ carries a connection, and that $\gamma$ is some piecewise smooth path in $M$ from a point $x$ to another point $y$ in $M$. The connection then induces a $G$-equivariant diffeomorphism $f_\gamma \colon \pi^{-1}(\{x\}) \rightarrow \pi^{-1}(\{y\})$ through parallel transport along $\gamma$. In terms of local trivializations around $x$ and $y$ we find by $G$-equivariance that $f_\gamma$ is simply given by multiplication in the fiber by an element $a \in G$. Now, if we apply a change of local trivializations around $x$ and $y$ to the parallel transporter $f_\gamma$, we find that local gauge transformations act on $f_\gamma$ by sending $a \mapsto g_x a g_y^{-1}$ (in the above notation).

This transformation rule for parallel transporters is the starting point of lattice gauge theory.
Indeed, we assume that $M$ is of the form $M^\prime \times \IR$, where $M^\prime$ is a Cauchy surface, or, more generally, a hypersurface in $M$.
Next, we replace the manifold $M^\prime$ by a finite subset $\Lambda^0$ of $M^\prime$, and restrict the principal fibre bundle to this set by working in the temporal gauge, thus killing the temporal component of the gauge field, and considering the set $P|_{\Lambda^0} := \pi^{-1}(\Lambda^0)$ as the (total space of the) new principal fibre bundle.
Subsequently we choose a finite set of paths $\Lambda^1$ in $M^\prime$ between points in $\Lambda^0$, which comes with two maps $s,t \colon \Lambda^1 \rightarrow \Lambda^0$, the {\em source} and {\em target maps}, that assign to a path its starting point and end point, respectively.
The pair $\Lambda := (\Lambda^0, \Lambda^1)$ then becomes a finite oriented graph, and accordingly refer to elements of $\Lambda^1$ as {\em edges} in the rest of the paper.

\begin{ass}
\label{ass:graphs}
We require that between any two vertices in $\Lambda$ there exists at most one edge, that no edge has the same source and target (i.e. the corresponding path is a loop), and that $\Lambda$ is connected when viewed as an unoriented graph.
\end{ass}

\noindent
The bundle $P|_{\Lambda^0}$ has a discrete base space, so it is trivialisable.
A choice of a trivialization yields an identification of $P|_{\Lambda^0}$ with $G^{\Lambda^0}$, the space of functions from $\Lambda^0$ to $G$.
Connections on $P$ are now given by the induced parallel transporters associated to the elements of $\Lambda^1$.
Having chosen a local trivialization of $P|_{\Lambda^0}$, we may identify each of these parallel transporters with an element of $G$, as we explained above; in that way, the space of connections is simply identified with $G^{\Lambda^1}$. Since we are interested in studying these approximations, the compact Lie group 
$\spc{}:=G^{\Lambda^1}$ will be the configuration space of our system.

\subsubsection{Gauge symmetries}
A gauge transformation now corresponds to an element $(g_x)_{x \in \Lambda^0} \in G^{\Lambda^0}$; we denote the gauge group consisting of such elements by $\cG$.
From the transformation rule derived above on parallel transporters, we see that there is an action of $\cG$ on $\spc{}$ given by 
\begin{equation}
G^{\Lambda^0} \times G^{\Lambda^1} \rightarrow G^{\Lambda^1}, \quad 
((g_x)_{x \in \Lambda^0}, (a_e)_{e \in \Lambda^1}) \mapsto (g_{s(e)} a_e g_{t(e)}^{-1})_{e \in \Lambda^1}.
\label{eq:action_on_configuration_space}
\end{equation}
This action of the gauge group on the configuration space extends naturally to an action of $\cG$ on the phase space $T^\ast \spc{} \simeq (T^\ast G )^{\Lambda^1}$.
Explicitly, the action is given by
\begin{align*}
\cG \times (T^\ast G )^{\Lambda^1} & \rightarrow (T^\ast G)^{\Lambda^1}, \\ 
((g_x)_{x \in \Lambda^0}, (a_e, \xi_e)_{e \in \Lambda^1}) & \mapsto  (g_{s(e)} a_e g_{t(e)}^{-1}, \xi_e \circ (T_{a_e} (L_{g_{s(e)}} \circ R_{g_{t(e)}^{-1}}))^{-1})_{e \in \Lambda^1}.
\end{align*}

\begin{rem}
The above action of $\cG$ on $T^*\spc{}$ preserves the canonical symplectic form and there is a canonical momentum map for this phase space.
However, since the action of the gauge group on the configuration and/or phase space is not free, the associated Marsden--Weinstein quotient is not a manifold.
The analysis of the reduced phase space in a simple example of a lattice consisting of one plaquette can be found in \cite{FRS07,Hue11,HRS09}.
The analysis of the reduced phase space for the general case can be done along the same lines using spanning trees in the graph $\Lambda$, but this goes beyond the scope of the present paper. 
\end{rem}

\subsubsection{The classical Hamiltonian}
We may draw an analogy between the above system and a collection of spherical rigid rotors, where each rotor sits on one of the links (which in fact arises for the special case $G = \text{\normalfont SO}(3)$).
We then find that the free Hamiltonian of the system is given by 
\begin{equation}
H((a_e,\xi_e)_{e \in \Lambda^1}) = \sum_{e \in \Lambda^1} \frac{1}{2} I_e \xi_e^2,
\label{eq:classical_Hamiltonian}
\end{equation}
where $I_e$ denotes the `moment of inertia' 
associated to the link $e$, and $\xi_e$ is its angular velocity (seen as elements of the Lie algebra of $G$). This free Hamiltonian describes the `electric part' since  $\xi_e$ is proportional to the time derivative of the gauge field at the link $e$.
A full description of the gauge system should incorporate additional terms in the Hamiltonian that correspond to the `magnetic part'. These terms are gauge-invariant quantities that depend on the gauge field $(a_e)_{e \in \Lambda^1}$, such as traces of Wilson loops. We refer to \cite{kogut75} for a more extensive discussion.

\subsection{The quantum system}
Next, we discuss the quantization of the canonical system $T^*(\spc{}) = T^*(G^{\Lambda^1})$  defined in the previous section, adopting the \Cs algebraic approach to quantization of the cotangent bundle as described in \cite[Section II.3]{landsman98}. In line with Weyl quantization of $T^* \mathbb R^n$, the quantization of $T^* Q$ for any compact Riemannian manifold $Q$ is given there by the observable algebra $B_0(L^2(Q))$, the space of compact operators on $L^2(Q)$. Since the compact Lie group $\spc{}$ is naturally a compact Riemannian manifold, we find that the quantized observable algebra of $T^*(\spc{})$ is given by $A:= B_0(L^2(\spc{}))$ and the Hilbert space is $\cH= L^2(\spc{})$. Note that this is in line with the finite-lattice approximation of Hamiltonian QCD in \cite{KR02,kijowski04}.

Geometrically, we can also realize this \Cs algebra as a groupoid \Cs algebra. The construction is based on the pair groupoid $\sfG = \spc{} \times \spc{}$ so we first recall its general definition (cf. \cite[Section 3]{buneci}). 

\begin{defi}
Let $X$ be a locally compact Hausdorff space. The {\em pair groupoid} associated to $X$ has object space $X$ and space of morphism $X \times X$, with source and target maps given by the projections onto the first and the second factor respectively. Composition of morphism is given by concatenation and the inverse by $(x,y)^{-1} = (y,x)$. Note that all free and transitive groupoids are necessarily pair groupoids.
\end{defi}

\noindent
Now suppose that $X$ is endowed with a Radon measure $\mu$ of full support $X$. Recall that this is a measure on the Borel $\sigma$-algebra of $X$ that is locally finite and inner regular.
The $\ast$-algebra $C_c(X \times X)$, with convolution product 
\begin{equation*}
(\phi_1 \ast \phi_2)(x_1,x_2) := \int_X \phi_1(y, x_2) \phi_2(x_1, y) \: \d\mu(y)
\end{equation*}
and involution $\phi^*(x_1,x_2):= \overline{\phi(x_2,x_1)}$,  
is then represented by compact operators on $L^2(X, \mu)$.
Indeed, given $h \in C_c(X \times X)$, the associated integral operator $T_h$ on $L^2(X)$ is given by
\begin{equation}
\label{eq:integral-operator}
T_h \psi(x) := \int_X h(y,x) \psi(y) \: \d\mu(y).
\end{equation}
By definition the {\em reduced groupoid \Cs algebra} $C^*_{\mathrm{r}}(X \times X)$ is the closure in $B(L^2(X))$ of the image of the above representation. This is actually isomorphic to the \emph{full} groupoid \Cs algebra and one has
\begin{equation}
\label{eq:groupoid-C*-compacts}
C^*_{\mathrm{r}}(X \times X) \simeq C^*(X \times X) \simeq B_0(L^2(X)).
\end{equation}
\noindent 
We refer to \cite{renault80} for full details on the construction of groupoid \Cs algebras, see also \cite[III.3.4 and III.3.6]{landsman98}. The relation of this construction to strict quantization can be found in \cite[III.3.12]{landsman98}.

If we specialise to our case for which $\spc{}= G^{\Lambda^1}$ is our configuration space, this leads us to consider the pair groupoid $\sfG:= \spc{} \times \spc{}$, whose space of morphism is $\sfG^{(1)} = \spc{} \times \spc{}$ and whose space of objects is $\sfG^{(0)} =\spc{}$.
Thus the observable algebra $A$ is isomorphic to $C^*(\sfG)$.

\subsubsection{Gauge symmetries and reduction of the quantized system}
We will now discuss the reduction of the quantum system with respect to the gauge group. 

For this we use a procedure known as Rieffel induction \cite{rieffel74}, which in this context can be considered as the quantum analogue of Marsden-Weinstein reduction \cite{landsman95}.

Clearly, there is a unitary representation $U$ of the gauge group $\cG= G^{\Lambda^0}$ on $\cH = L^2(\spc{})$, which is given by dualizing the action in Equation \eqref{eq:action_on_configuration_space}:

\begin{equation}
\label{eq:quantum_gauge_group_action}
U(  (g_x)_{x\in \Lambda^0}) \psi ((a_e)_{e \in \Lambda^1} ) = \psi\left( (g_{s(e)} a_e g_{t(e)}^{-1})_{e \in \Lambda^1}  \right),
\end{equation}
for all $\psi \in \cH$.
Because $\cG$ is compact, we can use this representation to endow $\cH$ with the structure of a right Hilbert $C^\ast(\cG)$-module $\mathcal E$, where $C^\ast(\cG)$ denotes the group \Cs algebra of $\cG$.
Indeed, a $C^\ast(\cG)$-valued inner product is determined by
\begin{equation*}
\langle \phi, \psi \rangle (g) = \langle \psi, U ((g_x)_{x\in \Lambda^0}) \psi\rangle_\cH,
\end{equation*}
thus defining an element in $C(\cG) \subset C^\ast(\cG)$.
The reduced Hilbert space is then given by the balanced tensor product
\begin{equation*}
\mathcal E  \otimes_{C^\ast(\cG)} V,
\end{equation*}
with a representation module $V$ of $\cG$.
In our case of interest $V$ is the trivial representation, and one arrives at the Hilbert space $\cH^{\cG}$
of $\cG$-invariant vectors in $\cH$.

\begin{rem}
Since the group $\cG$ is compact, the quotient space $\cG \backslash \spc{}$ is also a compact Hausdorff space. Using the Riesz--Markov representation theorem, which establishes a one-to-one correspondence between positive functionals on $C_0(X)$ and Radon measures, we can push forward the measure $\mu$ to a measure $\mu_{\cG}$, to obtain the Hilbert space $L^2(\cG \backslash \spc{}, \mu_{\cG})$. This Hilbert space is naturally isomorphic to $\cH^{\cG}$. 
\end{rem}

\noindent
At the level of the observable algebra, one first considers the algebra $\A^\cG$ of elements of $\A$ that commute with the unitary representation of $\cG$. Since the space $\cH^\cG$ is invariant under these observables, we obtain a representation $\pi$ of the \Cs algebra $\A^\cG$ on $\cH^\cG$. It can be shown that this representation is not necessarily faithful but that $\A^\cG/\ker(\pi) \cong B_0(\cH^\cG)$ and that the generators of $\ker (\pi)$ implement the local Gauss law \cite{KR02,StvS}.
In view of equation \eqref{eq:groupoid-C*-compacts}, it is natural to associate the pair groupoid $(\cG \backslash \spc{}) \times (\cG \backslash \spc{})$ to the reduced system.

\subsubsection{The quantum Hamiltonian}
The Hamiltonian for quantum lattice gauge fields was introduced by Kogut and Susskind in \cite{kogut75}.
In analogy with the classical Hamiltonian of Equation \eqref{eq:classical_Hamiltonian}, its free part is given by the following differential operator on $C^\infty(G^{\Lambda^1})$:
\begin{equation}
\label{eq:hamiltonian}
H_0 = \sum_{e \in \Lambda^1} -\frac 12 I_e \Delta_e
\end{equation}
where $\Delta_e$ is the Laplacian on $G$, or, which is the same, the quadratic Casimir of $G$.
The operator $H_0$ is essentially self-adjoint on $C^\infty(G^{\Lambda^1}) \subset L^2(G^{\Lambda^1})$; we let $H_0$ denote its closure with domain $\dom(H_0) \subset L^2(G^{\Lambda^1})$. 
Since $H_0$ is the differential operator associated to the quadratic Casimir element of $G$, it is well-behaved with respect to the action of the gauge group:

\begin{prop}
Let $\cH := L^2(\spc{})$.
The operator $H_0$ is equivariant with respect to the action of the gauge group defined in equation \eqref{eq:quantum_gauge_group_action}.
Its restriction $H_{0, \red}$ to $\dom(H_0) \cap \cH^\cG$ is a self-adjoint operator on $\cH^\cG$, and the following diagram
\begin{equation*}
\xymatrix{
\dom(H_0) \ar[r]^-{H_0} \ar[d]_{p_{\cH^\cG}} & \cH \ar[d]^{p_{\cH^\cG}} \\
\dom(H_0) \cap \cH^\cG \ar[r]^-{H_{0, \red}} & \cH^\cG
}
\end{equation*}
is commutative.
\label{prop:reduction_of_the_electric_part_of_the_Hamiltonian}
\end{prop}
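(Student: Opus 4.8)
The plan is to deduce the proposition from a single structural fact: for each edge $e$, the operator $\Delta_e$ is the quadratic Casimir of $G$ acting in the $e$-th slot, and this Casimir is \emph{bi-invariant}, i.e.\ it commutes with both left and right translations of $G$.

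First I would establish equivariance on the smooth core. Write a generic gauge transformation as $h = (g_x)_{x \in \Lambda^0} \in \cG$. Its action on the $e$-th factor of $\spc{} = G^{\Lambda^1}$ is the diffeomorphism $a_e \mapsto g_{s(e)} a_e g_{t(e)}^{-1} = (L_{g_{s(e)}} \circ R_{g_{t(e)}^{-1}})(a_e)$, which is an isometry of $G$ for the bi-invariant metric whose Laplace--Beltrami operator is $\Delta_e$. Since the Laplacian of a compact Lie group commutes with pullback by any isometry, and in particular with all left and right translations, one gets $U(h) \Delta_e = \Delta_e U(h)$ on $C^\infty(\spc{})$ for every $e$; summing over $\Lambda^1$ yields $U(h) H_0 \psi = H_0 U(h) \psi$ for all $\psi \in C^\infty(\spc{})$. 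As each $U(h)$ is a unitary preserving the core $C^\infty(\spc{})$ on which $H_0$ is essentially self-adjoint, this intertwining passes to the closure, so that $U(h)$ maps $\dom(H_0)$ into itself and commutes with $H_0$ there. This is the claimed equivariance.

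Next I would move to the invariant subspace. The orthogonal projection onto $\cH^\cG$ is the Haar average $p_{\cH^\cG} = \int_\cG U(h)\,\d h$. By the previous step each $U(h)$ commutes with the bounded resolvent $(H_0 - i)^{-1}$; integrating over the compact group $\cG$ and using boundedness to commute the integral past $(H_0 - i)^{-1}$ shows that $p_{\cH^\cG}$ commutes with $(H_0 - i)^{-1}$ as well. Equivalently, $p_{\cH^\cG}$ maps $\dom(H_0)$ into $\dom(H_0) \cap \cH^\cG$ and satisfies $p_{\cH^\cG} H_0 = H_0 p_{\cH^\cG}$ on $\dom(H_0)$; in other words $\cH^\cG$ is a reducing subspace for $H_0$.

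Finally, the reducing-subspace theorem yields that the restriction $H_{0,\red}$ of $H_0$ to $\dom(H_0) \cap \cH^\cG$ is self-adjoint on $\cH^\cG$, and commutativity of the diagram is then automatic: for $\psi \in \dom(H_0)$ one has $p_{\cH^\cG}\psi \in \dom(H_0) \cap \cH^\cG$ and $H_{0,\red}(p_{\cH^\cG}\psi) = H_0(p_{\cH^\cG}\psi) = p_{\cH^\cG}(H_0\psi)$. The one point demanding genuine care---and the main obstacle---is the passage from the purely formal commutation on the dense smooth core to bona fide commutation with the self-adjoint \emph{closure} $H_0$; this is exactly where essential self-adjointness on $C^\infty(\spc{})$ together with the unitarity and core-preservation of the $U(h)$ are indispensable, and it is what licenses the resolvent and reducing-subspace arguments.
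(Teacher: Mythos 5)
Your proof is correct, and its first half mirrors the paper's: both establish equivariance on the smooth core (the paper from left-invariance of $H_0$ together with centrality of the Casimir element in $\fU(\fg^{\Lambda^1})$; you from the fact that the gauge action is by isometries of the bi-invariant metric --- equivalent observations), and both then pass to the closure and show that $p_{\cH^\cG}$ maps $\dom(H_0)$ into $\dom(H_0) \cap \cH^\cG$ while intertwining $H_0$ and $H_{0,\red}$ (the paper via strong limits of linear combinations of the $U(h)$ together with closedness of $H_0$; you via commutation of the Haar average with the bounded resolvent $(H_0 - i)^{-1}$). The genuine divergence is in proving self-adjointness of $H_{0,\red}$. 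You get it by citing the reducing-subspace theorem, since commutation of $p_{\cH^\cG}$ with the resolvent says exactly that $\cH^\cG$ reduces $H_0$. The paper instead argues by hand with graphs: setting $J(x,y) = (-y,x)$, self-adjointness of $H_0$ gives $\cH^2 = \Graph(H_0) \oplus J(\Graph(H_0))$; the inclusion $\Graph(H_{0,\red}) \subseteq \Graph(H_0)$ yields orthogonality of $\Graph(H_{0,\red})$ and $J(\Graph(H_{0,\red}))$, and the identity $\Graph(H_{0,\red}) = \{(p_{\cH^\cG}(x), p_{\cH^\cG}(y)) \colon (x,y) \in \Graph(H_0)\}$ shows these subspaces span $(\cH^\cG)^2$, which characterizes self-adjointness directly in terms of graphs. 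Your route is more modular and leans on a standard textbook result; the paper's is self-contained and avoids resolvents entirely. The one step in your write-up that deserves an explicit line is the passage of the intertwining relation from the core to the closure: take $\psi_n \in C^\infty(\spc{})$ converging to $\psi$ in graph norm and use closedness of $H_0$ together with the fact that each $U(h)$ preserves the core; this is routine, so it is a matter of completeness rather than a gap.
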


\begin{proof}
Note that $H_0|_{C^\infty(\sfG^{(0)})}$ is equivariant with respect to the left-regular representation since it is a left-invariant differential operator.
It is also equivariant with respect to the right regular representation, since $\Omega_e$ lies in the center of $\fU(\fg^{\Lambda^1})$ for each $e \in \Lambda^1$.
Thus $H_0|_{C^\infty(\sfG^{(0)})}$ is equivariant with respect to the action of the product of the two aforementioned representations, so in particular, it is equivariant with respect to the action of $\cG$.

An immediate consequence of this equivariance is that $H_0|_{C^\infty(\sfG^{(0)})}$ leaves $\cH^\cG$ invariant.
Because $H_0$ is by definition the closure of $H_0|_{C^\infty(\sfG^{(0)})}$, the space $\cH^\cG$ is also an invariant subspace for $H_0$.
In addition, since the orthogonal projection $p_{\cH^\cG}$ onto $\cH^\cG$ is a strong limit of linear combinations of unitary operators associated to elements of $H$, we have $p_{\cH^\cG}(\dom(H_0)) \subseteq \dom(H_0) \cap \cH^\cG$, and the above diagram is indeed commutative.

Finally, we prove that $H_{0, \red}$ is self-adjoint.
Let $J \colon \cH^2 \rightarrow \cH^2$ be the operator given by $(x,y) \mapsto (-y,x)$.
Then we have $\cH^2 = \Graph(H_0) \oplus J(\Graph(H_0))$ by self-adjointness of $H_0$.
From the fact that $\Graph(H_{0, \red}) \subseteq \Graph(H_0)$, we infer that $\Graph(H_{0, \red}) \perp J(\Graph(H_{0, \red}))$.
On the other hand, it follows from our discussion in the previous paragraph that
\begin{equation*}
\Graph(H_{0, \red})
= \set{(p_{\cH^\cG}(x),p_{\cH^\cG}(y))}{(x,y) \in \Graph(H_0)},
\end{equation*}
so $\Graph(H_{0, \red}) + J(\Graph(H_{0, \red})) = (\cH^\cG)^2$, hence
\begin{equation*}
\Graph(H_{0, \red}) \oplus J(\Graph(H_{0, \red}))
= (\cH^\cG)^2,
\end{equation*}
which shows that $\Graph(H_{0, \red})$ is indeed self-adjoint.
\end{proof}

\section{Refinements of the quantum system}
\label{sec:refinements}

\noindent
Our approach towards formulating a continuum limit from a gauge theory on a graph is based on a suitable notion of embeddings of graphs, referred to as `refinements'. 

We follow Baez \cite{baez96} in his description of an inverse system of configuration spaces and a direct system of Hilbert spaces, both indexed over the set of graphs with partial order given by refinement.
After reviewing this construction, we will extend this description to the level of the pair groupoids, the corresponding observable \Cs algebras and the (free) Hamiltonians. 
\subsection{Refinements of graphs}

We start by recalling the following notion (cf. {\cite[Theorem II.7.1]{maclane}}):
\begin{defi}
Let $\Lambda = (\Lambda^0, \Lambda^1)$ be an oriented graph satisfying Assumption \ref{ass:graphs}. The {\em free} or {\em path category generated by $\Lambda$}, denoted by $\mathsf{C}_\Lambda$, is defined as follows:
\begin{itemize}
\item Its set of objects is $\Lambda^0$;
\item Let $x,y \in \Lambda^0$.
The set of morphisms from $x$ to $y$ is given by the collection of orientation respecting paths in $\Lambda$ with starting point $x$ and end point $y$;
\item Composition of morphisms is given by concatenation of paths;
\item The identity element of each object $x \in \Lambda^0$ is the path of length $0$ starting and ending at $x$.
\end{itemize}
\end{defi}

\noindent
This predicates the following formulation of embedding a graph into another one:

\begin{defi}
\label{def:refinements}
Let $\Lambda_i$ and $\Lambda_j$ be two oriented graphs with corresponding free categories $\mathsf{C}_{\Lambda_i}$ and $\mathsf{C}_{\Lambda_j}$.
Suppose in addition that there exists a functor $\iota_{i,j} \colon \mathsf{C}_{\Lambda_i} \rightarrow \mathsf{C}_{\Lambda_j}$ such that:
\begin{myenum}
\item The map $\iota^{(0)}_{i,j} \colon \Lambda_i^0 \rightarrow \Lambda_j^0$ between the sets of objects is an injection;
\item The map $\iota^{(1)}_{i,j}$ between the sets of morphisms maps elements of $\Lambda_i^1$ (identified with their corresponding paths) to paths in $\Lambda_j$ such that
\begin{itemize}
\item Each edge $e \in \Lambda_i^1$ is mapped to a nontrivial path under the map $\iota^{(1)}_{i,j}$;
\item If $e$ and $e^\prime$ are distinct elements of $\Lambda^1$, then $\iota^{(1)}_{i,j}(e)$ and $\iota^{(1)}_{i,j}(e^\prime)$ have no common edges.
\end{itemize}
\end{myenum}
We call the triple $(\Lambda_i, \Lambda_j, \iota_{i,j})$ a {\em refinement} of the graph $\Lambda_i$. 
Given such a refinement, we say that $\Lambda_i$ is {\em coarser} than $\Lambda_j$, and that $\Lambda_j$ is {\em finer} than $\Lambda_i$.

When no confusion arises, we will omit the subscript $_{i,j}$ from $\iota$.
\end{defi}

\begin{rem}
Given three graphs $\Lambda_i$, $\Lambda_j$ and $\Lambda_k$, and refinements $(\Lambda_i, \Lambda_j, \iota_{i,j})$ and $(\Lambda_j, \Lambda_k, \iota_{j,k})$, then there exists a canonical refinement $(\Lambda_i, \Lambda_k, \iota_{i,k})$, where we have
$\iota_{i,k} = \iota_{j,k} \circ \iota_{i,j}$.

\label{rem:successive_refinement1}
\end{rem}

\noindent 
This allows us to define another category:
\begin{defi}
We let $\mathsf{Refine}$ denote the category with the following properties:
\begin{itemize}
\item Its set of objects is the class of oriented graphs;
\item Given two oriented graphs $\Lambda_i$ and $\Lambda_j$, then the set of morphisms from $\Lambda_i$ to $\Lambda_j$ is given by the set of refinements $(\Lambda_i, \Lambda_j, \iota)$.
\item Composition is given by composition of refinement functors.
\item For each oriented graph $\Lambda$, there is a canonical refinement $(\Lambda, \Lambda, \Id)$, where $\Id^{(0)}$ and $\Id^{(1)}$ are the identity maps on the spaces of objects and morphisms in $\mathsf{C}_\Lambda$.
\end{itemize}
\end{defi}

\begin{figure}
\includegraphics[width=0.9\textwidth]{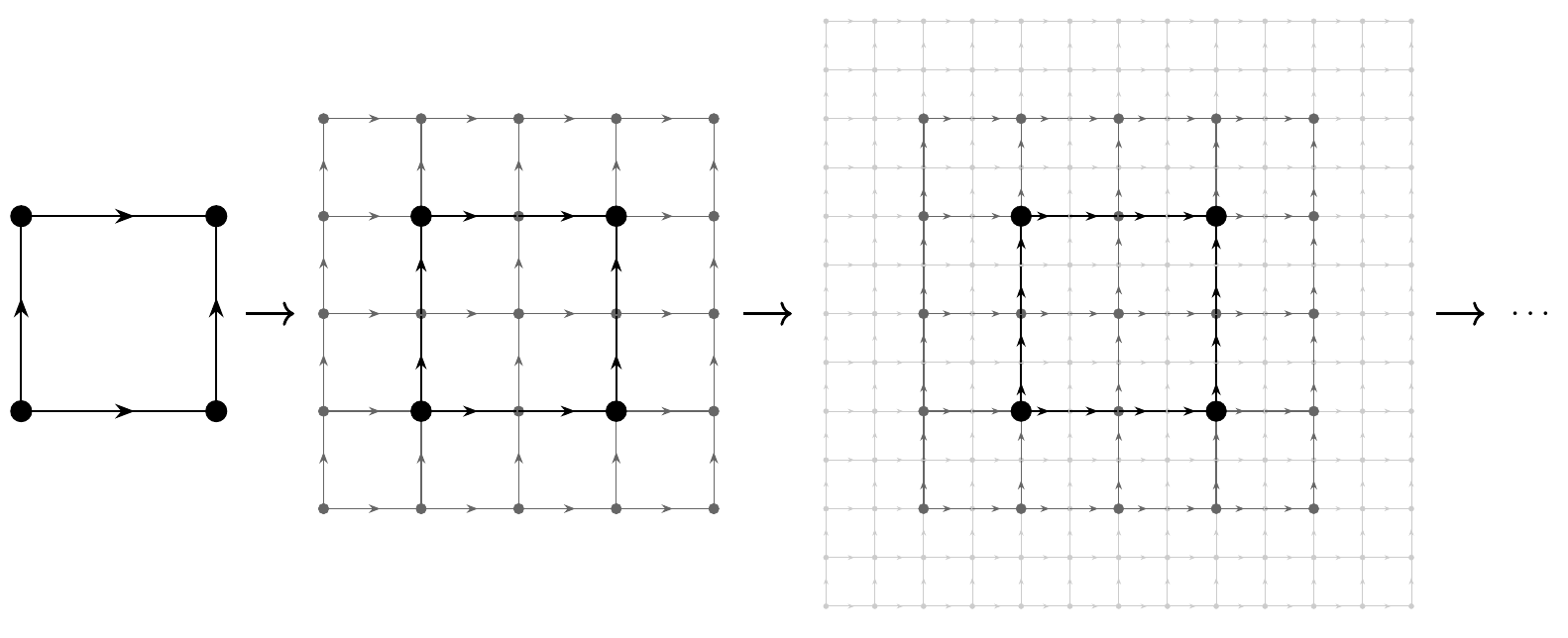}
\caption{The notion of refiniments of graphs as introduced above allows to simultaneously take the thermodynamic and the continuum limit of the system.}
\end{figure}

\noindent
Given a refinement $\iota: \Lambda_i \to \Lambda_j$ of two graphs $\Lambda_i,\Lambda_j$, we introduce a map $\R_{i,j} \colon \spc {j} \rightarrow \spc{i}$ between the corresponding configuration spaces as follows.
Given an edge $e \in \Lambda_i$, we let
\begin{align}
\R_{i,j}(a)_e = a_{e_1} \cdots a_{e_n},
\label{eq:R0}
\end{align}
where $\iota^{(1)}_{i,j}(e) = (e_1,\ldots,e_n)$.
The compatibility of these maps under composition is readily checked and we arrive at the following result.
\begin{prop}
\label{prop:refinements_of_conf}
There exists a contravariant functor from $\mathsf{Refine}$ to the category of compact Hausdorff spaces that sends a graph $\Lambda_i$ to the space $\spc{i}$, and a refinement $(\Lambda_i, \Lambda_j, \iota_{i,j})$ to the map $\R_{i,j}$..
\end{prop}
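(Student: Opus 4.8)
The plan is to verify the three defining properties of a contravariant functor $F$ with $F(\Lambda_i) = \spc{i}$ and $F(\Lambda_i, \Lambda_j, \iota_{i,j}) = \R_{i,j}$: that each $\R_{i,j}$ is a morphism in the category of compact Hausdorff spaces, i.e. a continuous map; that identity refinements are sent to identity maps; and that composites of refinements are sent to composites of the maps $\R$ in the reversed order. First I note that the objects are well defined, since $G$ is a compact Lie group and each edge set $\Lambda_i^1$ is finite, so the configuration space $\spc{i} = G^{\Lambda_i^1}$ is a finite product of compact Hausdorff spaces and hence itself compact Hausdorff.

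Next I would establish continuity of $\R_{i,j}$. Fixing an edge $e \in \Lambda_i^1$ with $\iota^{(1)}_{i,j}(e) = (e_1, \ldots, e_n)$, the $e$-component of $\R_{i,j}$ is the composite of the coordinate projection $\spc{j} \to G^n$, $a \mapsto (a_{e_1}, \ldots, a_{e_n})$, followed by the iterated multiplication map $G^n \to G$. Both are continuous because $G$ is a topological group, so each component of $\R_{i,j}$ is continuous; since the codomain $\spc{i}$ carries the product topology over the finite set $\Lambda_i^1$, this yields continuity of $\R_{i,j}$ itself. Note that once $\iota_{i,j}$ is fixed no further choices enter the definition, and that each $\iota^{(1)}_{i,j}(e)$ is a genuine orientation-respecting path in the free category, so that every $a_{e_k}$ lies in $G$ and the products are well defined without any inverses.

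It then remains to check the two compatibility conditions. For the identity refinement $(\Lambda, \Lambda, \Id)$ the functor sends each edge to the length-one path consisting of that edge alone, whence $\R(a)_e = a_e$ and $F(\Lambda, \Lambda, \Id) = \Id_{\spc{}}$. For composition, given refinements $\iota_{i,j}$ and $\iota_{j,k}$ with composite $\iota_{i,k} = \iota_{j,k} \circ \iota_{i,j}$ (Remark \ref{rem:successive_refinement1}), I would fix $e \in \Lambda_i^1$, write $\iota^{(1)}_{i,j}(e) = (e_1, \ldots, e_n)$, and use that $\iota_{j,k}$ is a functor and therefore preserves concatenation of paths: $\iota^{(1)}_{i,k}(e)$ is exactly the concatenation of $\iota^{(1)}_{j,k}(e_1), \ldots, \iota^{(1)}_{j,k}(e_n)$. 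Reading off the defining formula \eqref{eq:R0} on both sides and grouping the product of the $a$-entries according to this concatenation gives $\R_{i,k}(a)_e = \R_{i,j}(\R_{j,k}(a))_e$ for every $e$, that is, $\R_{i,k} = \R_{i,j} \circ \R_{j,k}$, which is precisely the contravariance of $F$.

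The computation is essentially bookkeeping, and the one point demanding care is that $G$ is in general non-abelian, so the order of the factors matters throughout: the contravariant reversal of the arrows must be matched with the fact that functoriality of $\iota$ preserves the order in which edges appear along a concatenated path. Once the orientation-respecting structure of the path category is invoked to guarantee that each edge is traversed in its given direction, the identification of the two ordered products is immediate, and I expect this order-tracking to be the only subtle aspect of an otherwise routine verification.
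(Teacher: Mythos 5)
Your proposal is correct and fills in exactly the verification the paper leaves implicit (the paper simply asserts that ``the compatibility of these maps under composition is readily checked''): continuity of each $\R_{i,j}$ via projections and group multiplication, identities to identities, and contravariant compatibility with composition by using that the functor $\iota_{j,k}$ preserves concatenation of paths together with associativity in $G$. Your attention to the ordering of factors for non-abelian $G$ is the right point to flag, and nothing in your argument deviates from the route the paper intends.
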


\begin{rem}
A particular consequence of the above proposition is that a direct system  $((\Lambda_i)_{i \in I}, (\iota_{i,j})_{i,j \in I, \: i \leq j})$ in $\mathsf{Refine}$ induces an inverse system\\
$((\spc{i})_{i \in I}, (\R_{i,j})_{i,j \in I, \: i \leq j})$ in the category of compact Hausdorff spaces.
In what follows, we will construct various other co- and contravariant functors from $\mathsf{Refine}$ to certain categories, which induce direct and inverse systems in these categories, respectively.
For the sake of brevity, we will write the above direct system as $(\Lambda_i, \iota_{i,j})$, and do the same with other direct and inverse systems.
\end{rem}

\subsubsection{Elementary refinements}
\label{sect:elementary_refinements}

In what follows we need to carry out a number of computations, some of which are rather tedious to write out for arbitrary refinements. We simplify our computations by making use of the fact that any refinement can be decomposed into the composition of \emph{elementary refinements}. This is in line with \cite[Lemma 4]{baez96}, although we do not admit the reversal of the orientation of an edge.
More precisely, given an arbitrary refinement $(\Lambda_i, \Lambda_j, \iota)$, there exists a sequence $(\Lambda_k, \Lambda_{k + 1}, \iota_{k,k + 1})_{k = 0}^{n - 1}$ of refinements such that $\Lambda_0 = \Lambda_i$, $\Lambda_n = \Lambda_j$, $\iota = \iota_{n - 1,n} \circ \dots \circ \iota_{0,1}$, and for each $i \in \{0,\ldots,n - 1\}$, the refinement $(\Lambda_k, \Lambda_{k + 1}, \iota_{k,k + 1})$ falls into one of the following two classes of examples:
\begin{itemize}
\item The graph $\Lambda_{k + 1}$ is obtained from $\Lambda_k$ by adding an extra edge
\begin{center}
\includegraphics[width=0.35\textwidth]{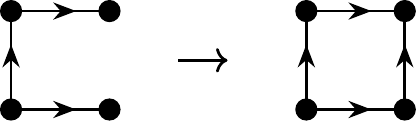}
\end{center}
or by adding an extra vertex and an extra edge:
\begin{center}
\includegraphics[width=0.35\textwidth]{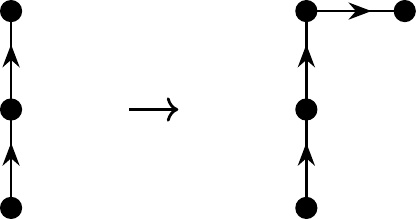}
\end{center}
\noindent
At the level of configuration spaces, both of these embeddings induce the map
\begin{equation}
\label{eq:addededge}
\R_{k, k + 1} \colon \spc {k + 1} \rightarrow \spc {k}, \quad 
((a_{e})_{e \in \Lambda^1_{k}}, a_{e_0}) \mapsto (a_e)_{e \in \Lambda^1_{k}},
\end{equation}
where $e_0 \in \Lambda^{(1)}_{k + 1}$ denotes the `added' edge.

\item The graph $\Lambda_{k + 1}$ is obtained from $\Lambda_k$ by subdividing an edge into two edges:
\begin{center}
\includegraphics[width=0.5\textwidth]{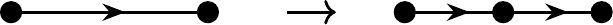}
\end{center}

\noindent
This type of embedding induces the following map between configuration spaces:
\begin{equation}
\label{eq:edgesubd}
\R_{k,k + 1} \colon \spc {k + 1} \rightarrow \spc {k}, \quad 
((a_{e})_{e \in \Lambda^1_{k} - \{e_0\}}, a_{e_1}, a_{e_2}) \mapsto ((a_e)_{e \in \Lambda^1_{k} - \{e_0\}}, a_{e_1} a_{e_2}),
\end{equation}
where $e_0 \in \Lambda^1_{k}$ denotes the edge that is `subdivided' into $e_1$ and $e_2$.
\end{itemize}

\noindent
It is clear that each embedding $(\Lambda_i, \Lambda_j,\iota_{i,j})$ can be decomposed into a sequence of elementary refinements of these two types, so we just check that the corresponding map $R_{i,j} : \spc{j} \to \spc{i}$ is independent of the chosen sequence $(\Lambda_k, \Lambda_{k + 1}, \iota_{k,k + 1})_{i = 0}^{n - 1}$.
This comes down to checking that given two pairs of elementary refinements \begin{align*}
&((\Lambda_k, \Lambda_{k + 1}, \iota_{k,k + 1}), (\Lambda_{k + 1}, \Lambda_{k + 2}, \iota_{k + 1,k + 2})), \\
&((\Lambda_k, \Lambda^\prime_{k + 1}, \iota_{k,k + 1}'), (\Lambda^\prime_{k + 1}, \Lambda_{k + 2}, \iota_{k + 1,k + 2}')),
\end{align*}
the corresponding maps between configuration spaces 
satisfy
\begin{equation}
\R_{k, k + 1} \circ \R_{k + 1, k + 2} = \Rp_{k, k + 1} \circ \Rp_{k + 1, k + 2}.
\label{eq:independence_of_elementary_refinements}
\end{equation}
We distinguish three possible cases:
\begin{itemize}
\item If all of the above refinements are additions of a single edge, with the second pair put in opposite order with respect to the first, then the induced maps between the configuration spaces can be commuted and Equation \eqref{eq:independence_of_elementary_refinements} holds;

\item Suppose that $\iota_{k,k + 1}$ is the addition of an edge, while $\iota_{k + 1,k + 2}$ subdivides that edge into two edges, and the refinements $\iota_{k,k + 1}',\iota_{k + 1,k + 2}'$ in the second pair are both additions of an edge:
\begin{center}
\includegraphics[width=0.9\textwidth]{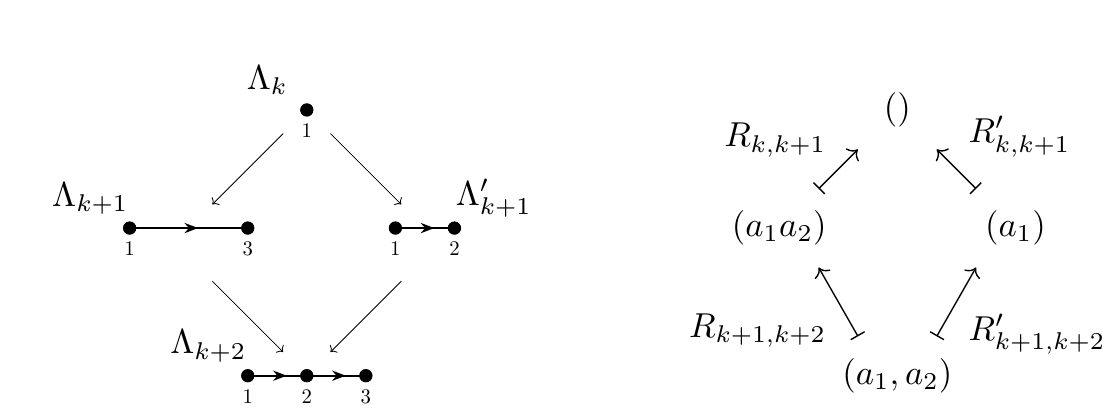}
\end{center}
The maps in the diagram on the right commute, which shows that Equation \eqref{eq:independence_of_elementary_refinements} holds;

\item Suppose that both pairs of refinements correspond to the subdivision of an edge into two edges but in different orders. Then we have

\begin{center}
\includegraphics[width=0.9\textwidth]{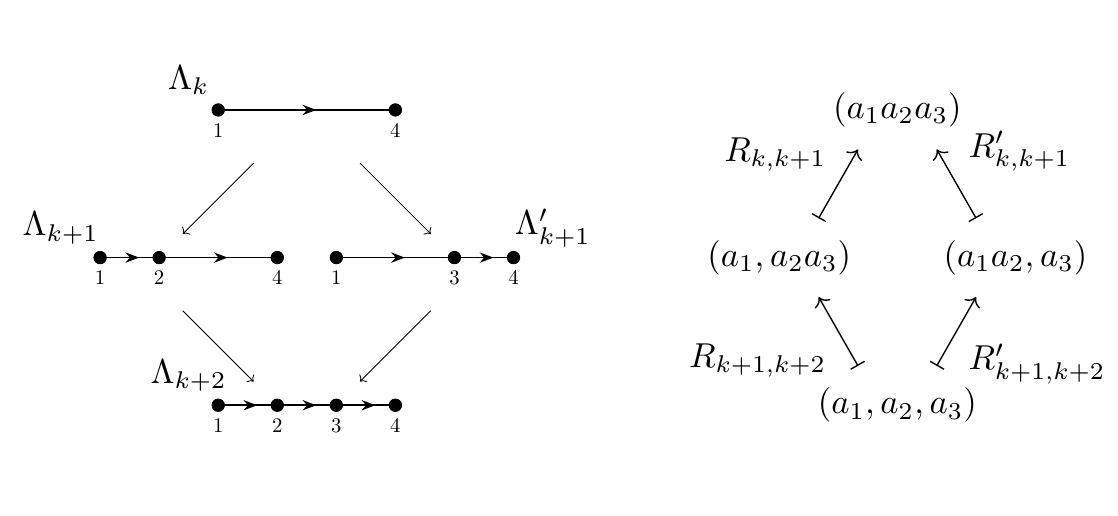}
\end{center}
from which we infer that also in this case, Equation \eqref{eq:independence_of_elementary_refinements} holds.
\end{itemize}
Throughout the rest of the text, whenever we discuss a refinement $(\Lambda_i, \Lambda_j, \iota)$ of graphs, we shall only discuss the cases in which the refinements are elementary refinements, and use the above observation to extend statements to the general case.

\subsubsection{The action of the gauge group}

Let us fix two graphs $\Lambda_i$ and $\Lambda_j$ together with a refinement $(\Lambda_i, \Lambda_j, \iota)$. 

The map $\iota^{(0)}$ induces a surjective group homomorphism between the gauge groups given by pull-back:
\begin{equation}
\label{eq:grpHom}
(\iota^{(0)})^\ast \colon \cG_j \rightarrow \cG_i, \quad 
g = (g_x)_{x \in \Lambda_j^0} \mapsto (g_{\iota^{(0)}(x)})_{x \in \Lambda_i^0},
\end{equation}
Clearly, this map can be directly factorized into products of maps corresponding to elementary refinements. 

Moreover, one readily verifies that $\R_{i,j} \colon \spc j \rightarrow \spc i$ satisfies the equivariance condition 
\begin{equation*}
(\iota^{(0)})^\ast(g) \cdot \R_{i,j} (a) = \R_{i,j} (g \cdot a),
\end{equation*} 
for all $g \in \cG_j$ and $a \in \spc j$, hence it descends to a map $\Rr_{i,j} \colon \cG_j \backslash \spc j \rightarrow \cG_i \backslash \spc i$. 

If we let $\pi_i \colon \spc i \rightarrow \cG_i \backslash \spc i$ denote the canonical projection, we obtain a commutative diagram:
\begin{equation}
\label{eq:commSpaces}
\xymatrix{
\spc {j}  \ar[d]_{\pi_j} \ar[r]^{\R_{i,j} }& \spc{i} \ar[d]_{\pi_i}\\
\cG_j \backslash \spc{j} \ar[r]^{\Rr_{i,j} } & \cG_i\backslash \spc{i} 
}
\end{equation}

\begin{prop}
There exists a contravariant functor from $\mathsf{Refine}$ to the category of compact Hausdorff spaces that sends a graph $\Lambda_i$ to the space $\cG_i \backslash \spc{i}$, and a refinement $(\Lambda_i, \Lambda_j, \iota_{i,j})$ to the map $\Rr_{i,j}$.
\end{prop}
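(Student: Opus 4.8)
The plan is to obtain this functor by descending the configuration-space functor of Proposition~\ref{prop:refinements_of_conf} along the quotient maps $\pi_i$. Most of the required data has already been assembled before the statement: the map $\Rr_{i,j}$ is well-defined because of the equivariance of $\R_{i,j}$, and the commutative square~\eqref{eq:commSpaces} records the intertwining relation $\pi_i \circ \R_{i,j} = \Rr_{i,j} \circ \pi_j$. What remains is to check that the quotients are compact Hausdorff spaces, that each $\Rr_{i,j}$ is continuous, and that the assignment respects identities and composition.

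First I would treat the objects. Since $\spc{i} = G^{\Lambda_i^1}$ is compact Hausdorff and the compact group $\cG_i = G^{\Lambda_i^0}$ acts continuously on it, the orbit equivalence relation is closed, so $\cG_i \backslash \spc{i}$ is compact Hausdorff and $\pi_i$ is a continuous open surjection; this is the same situation already used in the Remark following~\eqref{eq:quantum_gauge_group_action}. Continuity of $\Rr_{i,j}$ then follows from the universal property of the quotient topology: the composite $\pi_i \circ \R_{i,j}$ is continuous and factors through $\pi_j$ as $\Rr_{i,j} \circ \pi_j$ by~\eqref{eq:commSpaces}, and $\pi_j$ is a quotient map.

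Finally, functoriality is forced by surjectivity of the maps $\pi_i$. For the identity refinement, $\R_{i,i} = \Id$ gives $\Rr_{i,i} \circ \pi_i = \pi_i$, hence $\Rr_{i,i} = \Id$. For composable refinements $(\Lambda_i,\Lambda_j,\iota_{i,j})$ and $(\Lambda_j,\Lambda_k,\iota_{j,k})$, Proposition~\ref{prop:refinements_of_conf} gives $\R_{i,k} = \R_{i,j} \circ \R_{j,k}$; applying $\pi_i$ and using~\eqref{eq:commSpaces} twice yields $\Rr_{i,k} \circ \pi_k = \Rr_{i,j} \circ \Rr_{j,k} \circ \pi_k$, and cancelling the surjection $\pi_k$ on the right gives the contravariant composition law $\Rr_{i,k} = \Rr_{i,j} \circ \Rr_{j,k}$. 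By the reduction to elementary refinements in Section~\ref{sect:elementary_refinements} it is enough to carry out the well-definedness check for those, which is immediate. I expect the main (and rather mild) obstacle to be the Hausdorffness of the quotient, since that is the only place where compactness of $\cG_i$ is genuinely used; everything else is a routine descent argument built on the already established diagram~\eqref{eq:commSpaces}.
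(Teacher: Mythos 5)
Your proposal is correct and takes the same route the paper itself takes implicitly: the paper states this proposition without any written proof, treating it as an immediate consequence of the equivariance of $\R_{i,j}$, the commutative square \eqref{eq:commSpaces}, and the functoriality already established in Proposition \ref{prop:refinements_of_conf}. Your descent argument---compact Hausdorffness of the quotients, continuity of $\Rr_{i,j}$ via the universal property of the quotient map $\pi_j$, and the identity and composition laws obtained by cancelling the surjections $\pi_i$ and $\pi_k$---is exactly the fleshed-out version of that implicit argument.
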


\subsection{Hilbert spaces}

Next, we construct the Hilbert spaces of square integrable functions with respect to the Haar measure, and define the corresponding maps between them.
We start by recalling some results for the (Haar) measures on the configuration spaces, originally derived in \cite{ALMMT96,baez94,Lew94} (see also \cite{baez96,Ha94}). 
 
\begin{lem}
\label{lem:inv_sys_spaces}
On the inverse system of Hausdorff spaces $(\spc{i}, \R_{i,j})$ we have an \emph{exact} inverse system of measures for $(\spc{i}, \R_{i,j})$, i.e.\ a collection of Radon measures $\mu_i$ on $\spc{i}$ such that for $i \leq j$ one has $(\R_{i,j})_\ast(\mu_j)=\mu_i$.
In particular, the image of the Haar measure on $G^{\Lambda_j^1}$ under the map induced by the map $\R_{i,j}$ is the Haar measure on $ G^{\Lambda_i^1}$.
\end{lem}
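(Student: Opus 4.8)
\emph{Proof proposal.} The plan is to take $\mu_i$ to be the normalized Haar measure on the compact group $\spc{i} = G^{\Lambda_i^1}$, which is the product of the normalized Haar measures on the factors $G$, and then to establish the compatibility $(\R_{i,j})_\ast(\mu_j) = \mu_i$; this identity is precisely what it means for the inverse system of measures to be \emph{exact}. All spaces occurring here are compact (indeed compact Lie groups), so these Haar measures are automatically Radon with full support, and pushforwards along the continuous maps $\R_{i,j}$ are again Radon measures. Hence the only real content to verify is the equality of measures, after which the ``in particular'' statement follows by construction, since each $\mu_i$ is by definition the Haar measure on $G^{\Lambda_i^1}$.

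Since $\R$ is contravariantly functorial by Proposition \ref{prop:refinements_of_conf}, we have $\R_{i,j} = \R_{i,k} \circ \R_{k,j}$ whenever $i \le k \le j$, and therefore $(\R_{i,j})_\ast = (\R_{i,k})_\ast \circ (\R_{k,j})_\ast$. Invoking the decomposition of an arbitrary refinement into elementary refinements (Section \ref{sect:elementary_refinements}), it thus suffices to prove the claim for the two elementary types. For an elementary refinement of the first type (the addition of an edge, or of a vertex and an edge), the map $\R_{k,k+1}$ of Equation \eqref{eq:addededge} is the coordinate projection $G^{\Lambda^1_k} \times G \to G^{\Lambda^1_k}$. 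As $\mu_{k+1}$ is a product measure and the discarded factor carries total mass $1$, its pushforward under this projection is exactly the marginal $\mu_k$.

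For an elementary refinement of the second type (the subdivision of an edge $e_0$ into $e_1, e_2$), the map $\R_{k,k+1}$ of Equation \eqref{eq:edgesubd} is the identity on every factor except that it replaces the pair $(a_{e_1}, a_{e_2})$ by the product $a_{e_1} a_{e_2}$ via the multiplication map $m \colon G \times G \to G$. Because a product measure pushed forward along a map acting on a single tensor factor factorizes, the claim reduces to the single identity $m_\ast(\mu_G \times \mu_G) = \mu_G$, where $\mu_G$ is the Haar measure on $G$. This is the crux of the argument, and it is a direct consequence of translation invariance: for $f \in C(G)$ one computes
\begin{equation*}
\int_{G \times G} f(ab)\, \d\mu_G(a)\, \d\mu_G(b) = \int_G \Big( \int_G f(ab)\, \d\mu_G(b) \Big)\, \d\mu_G(a) = \int_G f\, \d\mu_G,
\end{equation*}
where the inner integral is evaluated using left-invariance of $\mu_G$ and the outer one using $\mu_G(G) = 1$. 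I expect this idempotence of Haar measure under convolution to be the only genuinely nontrivial step; everything else is functoriality together with the factorization of product measures, and the general case then follows from the elementary one by composition.
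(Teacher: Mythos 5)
Your proof is correct, and structurally it mirrors the paper's: the paper likewise takes the $\mu_i$ to be the normalized Haar measures, reduces via the decomposition of Section \ref{sect:elementary_refinements} to the two elementary refinements, dismisses the added-edge case as merely forgetting an integration variable, and handles the subdivision case by a Fubini computation using left-invariance of the Haar measure. The genuine difference is in the endgame. The paper never identifies the pushforward explicitly; it proves instead that $(\R_{i,i+1})_\ast(\mu_{i+1})$ is \emph{left-invariant} as a measure on the group $G^{\Lambda_i^1}$, so its conclusion rests (implicitly) on the uniqueness of the normalized Haar measure on a compact group, together with the unstated observation that the pushforward of a probability measure is again a probability measure. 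You instead compute the pushforward outright: the subdivision map is the identity on the untouched factors times the multiplication map $m$, so by factorization of product measures the claim collapses to the convolution identity $m_\ast(\mu_G \times \mu_G) = \mu_G$, which you verify directly. Your route is thus somewhat more self-contained --- no appeal to the uniqueness theorem is needed, and the role of the normalization $\mu_G(G) = 1$ is visible exactly where it enters --- whereas the paper's is the characterization-style argument (exhibit invariance, then quote uniqueness). Your handling of the first assertion is also a clean substitute for the paper's brief appeal to the Riesz--Markov theorem: once the Haar measures are shown to be compatible, Radon-ness and exactness of the system come for free.
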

\begin{proof}
The first part of the theorem follows from the Riesz--Markov representation theorem.

We will check the second part of the statement for the elementary refinements of Section \ref{sect:elementary_refinements}. 
Let $\phi$ be a continuous function on $G^{\Lambda_i^1}$ and let $\mu_{i+1}$ be the Haar measure on $G^{\Lambda_{i+1}^1}$.
By definition
\begin{equation*}
\int_{G^{\Lambda_i^1}} \phi \: \d (\R_{i,i+1})_\ast(\mu_{i+1}) := \int_{G^{\Lambda_{i+1}^1}} \left(\R_{i,i+1}\right)^*(\phi) \: \d\mu_{i+1} .
\end{equation*}
We will show that $(\R_{i,i+1})_\ast(\mu_{i+1})$ is left invariant, i.e. that 
\begin{equation*}
\int_{G^{\Lambda_i^1}} L_h \phi \: \d (\R_{i,i+1})_\ast(\mu_{i+1}) = \int_{G^{\Lambda_i^1}}\phi \: \d (\R_{i,i+1})_\ast(\mu_{i+1}) ,
\end{equation*}
for any $h \in G^{\Lambda^1_i}$.
Since the Haar measure on $G^{\Lambda_i^1}$ is the product of $\vert \Lambda_i^1 \vert$ Haar measures on $G$ it follows that
\begin{equation*}
\int_{G^{\Lambda_i^1}} L_h \phi \: \d (\R_{i,i+1})_\ast(\mu_{i+1})  
= \int_{G^{\Lambda_{i+1}^1}} \phi (h^{-1}\cdot \R_{i,i+1}(a)) \: \d\mu_{i+1}(a). \end{equation*}
An elementary refinement consisting of the addition of an edge amounts to forgetting an integration variable so there is nothing to prove. For the 

subdivision of an edge $e$ into $e_1 e_2$ 
we have
\begin{align*}
&\int_{G^{\Lambda_i^1}} \phi (h^{-1}\cdot \R_{i,i+1}(a)) \: \d\mu_{i+1}(a) \\
&= \int_{G^{\Lambda_i^1}} \phi \left( h^{-1}\cdot \left( (a_{e})_{e \in \Lambda^1_i \setminus\lbrace e_0 \rbrace}, a_{e_1} a_{e_2} \right) \right) \: \d\mu_{i+1}(a) \\ 
&= \int_{G^{\Lambda_i^1}} \phi \circ \R_{i,i+1} ((h^{-1} \cdot ((a_{e})_{e \in \Lambda^1_i \setminus\lbrace e_0 \rbrace}, a_{e_1})), a_{e_2} ) \: \d\mu_{i}((a_e)_{e \in \Lambda^1_i \setminus \lbrace e_0 \rbrace}, a_{e_1}) \mathord{\times} \d\mu(a_{e_2}) \\ 
&= \int_{G^{\Lambda_i^1}} \phi \circ \R_{i,i+1} \left( ( (a_{e})_{e \in \Lambda^1_i \setminus \lbrace e_0 \rbrace}, a_{e_1}), a_{e_2} \right) \: \d\mu_{i}((a_e)_{e \in \Lambda^1_i \setminus\lbrace e_0 \rbrace}, a_{e_1}) \times \d\mu(a_{e_2}) \\
&= \int_{G^{\Lambda_i^1}} \phi(\R_{i,i+1} (a)) \: \d\mu_{i+1}(a),
\end{align*}
where in the second last equality we have used left-invariance of the Haar measure $\d \mu_{i}$.
 \end{proof}

\begin{prop}
On the inverse system of Hausdorff spaces $(\cG_i \backslash \spc i, \Rr_{{i,j}})$ we have an \emph{exact} inverse system of measures.
\end{prop}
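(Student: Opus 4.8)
The plan is to produce the measures on the quotient spaces directly as pushforwards of the Haar measures already constructed in Lemma \ref{lem:inv_sys_spaces}, and then to verify exactness by a short diagram chase using the commutative square \eqref{eq:commSpaces}.

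First, for each index $i$ I would set $\mu_{\cG_i} := (\pi_i)_\ast \mu_i$, where $\mu_i$ is the Haar measure on $\spc{i} = G^{\Lambda_i^1}$ and $\pi_i \colon \spc{i} \to \cG_i \backslash \spc{i}$ is the canonical projection. Since $\cG_i$ is compact, the quotient $\cG_i \backslash \spc{i}$ is again a compact Hausdorff space, and the pushforward of a Radon measure along the continuous (hence proper, as the spaces are compact) map $\pi_i$ is again a Radon measure; this is exactly the construction indicated in the Remark following \eqref{eq:quantum_gauge_group_action}. Concretely, $(\pi_i)_\ast$ sends the positive functional $\phi \mapsto \int_{\spc{i}} \phi \circ \pi_i \, \d\mu_i$ on $C(\cG_i \backslash \spc{i})$ to a Radon measure via Riesz--Markov. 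This yields a candidate inverse system of measures $(\mu_{\cG_i})$.

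Next I would establish exactness, i.e.\ that $(\Rr_{i,j})_\ast \mu_{\cG_j} = \mu_{\cG_i}$ for $i \leq j$. The key input is the commutativity of \eqref{eq:commSpaces}, namely $\Rr_{i,j} \circ \pi_j = \pi_i \circ \R_{i,j}$, together with the functoriality of the pushforward, $(f \circ g)_\ast = f_\ast \circ g_\ast$. Computing,
\begin{align*}
(\Rr_{i,j})_\ast \mu_{\cG_j}
&= (\Rr_{i,j})_\ast (\pi_j)_\ast \mu_j
= (\Rr_{i,j} \circ \pi_j)_\ast \mu_j \\
&= (\pi_i \circ \R_{i,j})_\ast \mu_j
= (\pi_i)_\ast (\R_{i,j})_\ast \mu_j.
\end{align*}
By the exactness statement of Lemma \ref{lem:inv_sys_spaces} we have $(\R_{i,j})_\ast \mu_j = \mu_i$, so the right-hand side equals $(\pi_i)_\ast \mu_i = \mu_{\cG_i}$, as desired.

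There is essentially no hard step here: the whole argument reduces to functoriality of pushforward measures applied to the already-established commutative square and the already-established exactness on the unquotiented system. The only points requiring a moment of care are confirming that each $\mu_{\cG_i}$ is genuinely Radon (immediate from compactness and Riesz--Markov, as above) and observing that, because Lemma \ref{lem:inv_sys_spaces} is stated for all $i \leq j$, no separate reduction to elementary refinements is needed; should one prefer, exactness for elementary refinements plus composition suffices exactly as in the unquotiented case.
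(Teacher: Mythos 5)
Your proposal is correct and follows essentially the same route as the paper: the paper's (much terser) proof likewise defines the quotient measures as pushforwards of the Haar measures along the projections $\pi_i$ via Riesz--Markov, and deduces exactness from the commutativity of the square \eqref{eq:commSpaces}. Your explicit diagram chase $(\Rr_{i,j})_\ast (\pi_j)_\ast \mu_j = (\pi_i)_\ast (\R_{i,j})_\ast \mu_j = (\pi_i)_\ast \mu_i$ simply spells out what the paper leaves implicit.
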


\begin{proof}
By the Riesz--Markov representation theorem, the projection $\pi_i: \spc i \to \cG_i \backslash \spc i$ induces a map from the space of Radon measures on $\spc i$ to the space of Radon measures on $ \cG_i \backslash \spc i$. Equation \eqref{eq:commSpaces} then implies the existence of a commutative diagram between the corresponding spaces of Radon measures.
\end{proof}

\noindent
Let us now dualize this construction on the measure spaces $\spc{i}$ and construct a direct system of Hilbert spaces $L^2(\spc{i})$. Let us write $R : \spc{j} \to \spc{i}$ for a map between configuration spaces induced by an arbitrary refinement $\iota: \Gamma_i \to \Gamma_j$. We then set
\begin{equation}
\label{eq:isometry}
u := (\R)^\ast \colon L^2(\spc{i}) \rightarrow L^2(\spc{j}), \quad 
\psi \mapsto \psi \circ \R.
\end{equation}
Moreover, we define
\begin{equation*}
u^\red := (\Rr)^\ast \colon L^2(\cG_i \backslash \spc{i}) \rightarrow L^2(\cG_j \backslash \spc{j}), \quad 
\psi \mapsto \psi \circ \Rr.
\end{equation*}

\begin{prop}
If $p_i$ is the map given by
\begin{equation*}
p_i \colon L^2(\spc{i}) \mapsto L^2(\cG_i \backslash \spc{i}), \quad 
\psi \mapsto \left( \cG_i a \mapsto \int_{\cG_i} \psi(g \cdot a) \: \d\mu_{\cG_i}(g) \right),
\end{equation*}
for all $i$, where $\mu_{\cG_i}$ denotes the Haar measure on $\cG_i$, then:
\begin{myenum}
\item The pullback $\pi_i^\ast: L^2(\cG_i \backslash \spc{i}) \to L^2(\spc{i})$ of $\pi_i$ is the adjoint of $p_i$ (for all $i$).

\item The following squares
\begin{equation*}
\xymatrix{
L^2(\spc{i})\ar[r]^u & L^2(\spc{j}) \\
L^2(\cG_i \backslash \spc{i}) \ar[u]^{\pi_i^\ast} \ar[r]^{u^\red} & L^2(\cG_j \backslash \spc{j})\ar[u]_{\pi_j^\ast} \\
}
\end{equation*}
and
\begin{equation*}
\xymatrix{
L^2(\spc{i})\ar[r]^u \ar[d]_{p_i} & L^2(\spc{j}) \ar[d]^{p_j}  \\
L^2(\cG_i \backslash \spc{i}) \ar[r]^{u^\red} & L^2(\cG_j \backslash \spc{j})\\
}
\end{equation*}
commute.

\item The maps $u$ and $u^\red$ are isometries.
\end{myenum}
\label{prop:refinement_on_Hilbert_spaces1}
\end{prop}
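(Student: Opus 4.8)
The plan is to treat the three items separately, leaning on the two measure-theoretic results that precede the statement: exactness of the inverse system of measures on the configuration spaces $\spc{i}$ (Lemma~\ref{lem:inv_sys_spaces}), i.e.\ $(\R_{i,j})_\ast\mu_j=\mu_i$, and its counterpart on the quotients $\cG_i\backslash\spc{i}$ (the proposition just above). I would also record at the outset the one nontrivial fact I need beyond these: the measure on $\cG_i\backslash\spc{i}$ is the push-forward $(\pi_i)_\ast\mu_i$, and, because $\mu_i$ is $\cG_i$-invariant and $\cG_i$ is compact, $\mu_i$ disintegrates along $\pi_i$ with fibre measure the normalized Haar measure $\mu_{\cG_i}$ on $\cG_i$ (Weil's integration formula).

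Item (3) I would dispatch by a single change of variables. Since $u\psi=\psi\circ\R$, one has $\|u\psi\|^2=\int_{\spc{j}}|\psi\circ\R|^2\,\d\mu_j=\int_{\spc{i}}|\psi|^2\,\d(\R_\ast\mu_j)$, and $\R_\ast\mu_j=\mu_i$ by Lemma~\ref{lem:inv_sys_spaces}, so $u$ is an isometry; the identical computation with $\Rr$ and the exactness of the inverse system of measures on the quotients shows $u^\red$ is an isometry. For item (1) I would unwind the claim $\pi_i^\ast=p_i^\ast$ into the identity $\int_{\spc{i}}\overline{\psi}\,(\phi\circ\pi_i)\,\d\mu_i=\int_{\cG_i\backslash\spc{i}}\overline{p_i\psi}\,\phi\,\d\big((\pi_i)_\ast\mu_i\big)$, valid for all $\psi\in L^2(\spc{i})$ and $\phi\in L^2(\cG_i\backslash\spc{i})$. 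Disintegrating $\mu_i$ along $\pi_i$ on the left, and using that $\phi\circ\pi_i$ is constant on $\cG_i$-orbits and therefore factors out of the fibre integral, the left-hand side collapses precisely to the definition of $p_i$, giving the adjoint relation.

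For item (2), the upper square is pointwise: for $\phi\in L^2(\cG_i\backslash\spc{i})$ both $u\,\pi_i^\ast\phi$ and $\pi_j^\ast\,u^\red\phi$ equal $\phi$ precomposed with the two sides of the commuting square~\eqref{eq:commSpaces}, hence they coincide. The lower square is the delicate one. Tracing $\psi\in L^2(\spc{i})$ through $p_j\circ u$ gives the function $\cG_j b\mapsto\int_{\cG_j}\psi\big((\iota^{(0)})^\ast(h)\cdot\R(b)\big)\,\d\mu_{\cG_j}(h)$, where I have applied the equivariance $\R(h\cdot b)=(\iota^{(0)})^\ast(h)\cdot\R(b)$; tracing it through $u^\red\circ p_i$ gives $\cG_j b\mapsto\int_{\cG_i}\psi\big(g\cdot\R(b)\big)\,\d\mu_{\cG_i}(g)$. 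These two averages agree exactly because the push-forward of $\mu_{\cG_j}$ under the surjective homomorphism $(\iota^{(0)})^\ast\colon\cG_j\to\cG_i$ of~\eqref{eq:grpHom} is the Haar measure $\mu_{\cG_i}$ (the push-forward of normalized Haar under a surjective continuous homomorphism of compact groups is again normalized Haar, by translation-invariance and uniqueness).

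The hard part will be the lower square of item (2): it is the only place where gauge-equivariance of $\R$ and the normalizations of the Haar measures must be combined just so that the two averaging integrals match, and it relies on the Haar push-forward fact above. The conceptually analogous subtlety in item (1) is the disintegration underlying the adjoint identity. Once these two standard facts about Haar measure on compact groups are in place, every remaining manipulation is routine, and the elementary-refinement reduction of Section~\ref{sect:elementary_refinements} is not even needed here, since all the arguments are stated directly for an arbitrary $\R$ and $\Rr$.
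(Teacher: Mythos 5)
Your proof is correct, and in two of the three items it takes a genuinely leaner route than the paper's. For item (1) the arguments are essentially the same: your disintegration of $\mu_i$ along $\pi_i$ (Weil formula) is precisely what the paper establishes by hand, pulling the integral over $\cG_i \backslash \spc{i}$ back to $\spc{i}$ via the push-forward definition of the quotient measure, applying Fubini, and using $\cG_i$-invariance of $\mu_i$; the upper square in (2) is also handled identically, via $\pi_i \circ \R = \Rr \circ \pi_j$. The differences are in the lower square of (2) and in (3). For the lower square, the paper does not invoke the abstract fact that a surjective continuous homomorphism of compact groups pushes normalized Haar measure to normalized Haar measure; instead it exploits the concrete form of \eqref{eq:grpHom}: $\cG_j = G^{\Lambda_j^0}$ splits as $\cG_i \times G^{\Lambda_j^0 - \iota^{(0)}(\Lambda_i^0)}$, the Haar measure factorizes accordingly, and the redundant factor is integrated out. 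Your push-forward lemma is exactly what this computation proves in the case at hand, so the two arguments are equivalent; yours is more general and transplants to settings without the product structure, while the paper's is more self-contained. For (3), the paper first reduces isometry of $u^\red$ to isometry of $u$ (using the first square together with the fact that $\pi_i^\ast$ and $\pi_j^\ast$ are isometries by definition of the quotient measures), and then verifies that $u$ is isometric by explicit integral computations for the two types of elementary refinements of Section \ref{sect:elementary_refinements}. Your argument --- $\|u\psi\|^2 = \int_{\spc{i}} |\psi|^2 \, \d\bigl((\R)_\ast \mu_j\bigr) = \|\psi\|^2$ by Lemma \ref{lem:inv_sys_spaces}, and the same for $u^\red$ from the quotient version --- is shorter, treats arbitrary refinements uniformly, and makes transparent that isometry of $u$ is formally equivalent to exactness of the inverse system of measures; the paper's case analysis in effect re-derives that exactness rather than citing it. You are also right that, once those two measure-theoretic results are quoted as black boxes, the elementary-refinement decomposition plays no role in this proposition (it only enters through the proofs of those cited results).
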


\begin{proof}
(1) For $\psi \in L^2(\spc{i}),\varphi \in L^2(\cG_i \backslash \spc{i})$ we have that
\begin{align*}
\inp{\varphi}{p_i \psi}_{L^2(\cG_i \backslash \spc{i})}
&= \int_{\cG_i \backslash \spc{i}} \overline{\varphi(\cG_i a)} \int_{\cG_i} \psi(g \cdot a) \: \d\mu_{\cG_i}(g) \: \d\mu_{\cG_i \backslash \spc{i}}(a) \\
&= \int_{\spc{i}} \overline{\varphi \circ \pi_i(a)} \int_{\cG_i} \psi(g \cdot a) \: \d\mu_{\cG_i}(g) \: \d\mu_i(a) \\
&= \int_{\spc{i}} \int_{\cG_i} \overline{\varphi \circ \pi_i(g^{-1} \cdot a)} \psi(a) \: \d\mu_{\cG_i}(g) \: \d\mu_i(a) \\
&=  \int_{\spc{i}} \overline{\varphi \circ \pi_i(a)} \psi(a) \: \d\mu_i(a)
= \inp{\pi_i^\ast(\varphi)}{\psi}_{L^2(\spc{i})},
\end{align*}
where we have used bi-invariance of the Haar measure on $\spc{i}$ in the fourth step.

Commutativity of the first square in (2) follows directly from the fact that $\pi_i \circ \R = \Rr \circ \pi_j$, which holds by definition of $\Rr$. For the second square to commute, we let $a \in \spc{j}, \psi \in L^2(\spc{i})$ and compute that indeed
\begin{align*}
(p_j \circ u(\psi))(\cG_j a)
&= \int_{\cG_j} (u(\psi))(g \cdot a) \: \d\mu_{\cG_j}(g) 
= \int_{\cG_j} \psi \circ \R (g \cdot a) \: \d\mu_{\cG_j}(g)\\
&= \int_{\cG_j} \psi((\iota^0)^\ast(g) \cdot \R (a)) \: \d\mu_{\cG_j}(g) \\
&= \int_{G^{\Lambda_j^0 - \iota^0(\Lambda_i^0)}} \int_{\cG_i} \psi(g \cdot \R (a)) \: \d\mu_{\cG_i}(g) \: \d \nu(g^\prime) \\
&= \int_{\cG_i} \psi(g \cdot \Rr(a)) \: \d\mu_{\cG_i}(g)  
=  p_i(\psi)(\cG_i \R(a)) \\
&= (u^\red \circ p_i(\psi))(\cG_j a),
\end{align*}
where $\nu$ denotes the Haar measure on $G^{\Lambda_j^0 - \iota^0(\Lambda_i^0)}$.

For (3) we use that by the very definition of the measures on the spaces $\cG_i \backslash \spc{i}$ and $\cG_j \backslash \spc{j}$, the maps $\pi_i^\ast$ and $\pi_j^\ast$ are isometries.
Thus, by commutativity of the first square in (2), it suffices to show that $u$ is an isometry.
We will prove the statement for the elementary refinements discussed in Section \ref{sect:elementary_refinements}. 
Let $\psi \in L^2(\spc{i})$.
\begin{itemize}
\item If $\Lambda_j$ is obtained from $\Lambda_i$ by adding an edge $e^\prime \in \Lambda_j^1$ then
\begin{align*}
\|u(\psi)\|_{L^2(\spc{j})}^2
&= \int_{G^{\Lambda_j^1}} |u(\psi)((a_e)_{e \in \Lambda_j^1})|^2 \: \d\mu_j((a_e)_{e \in \Lambda_j^1}) \\
&= \int_{G} \int_{G^{\Lambda_i^1}} |\psi((a_e)_{e \in \Lambda_i^1})|^2 \: \d\mu_i((a_e)_{e \in \Lambda_j^1}) \: \d\mu(a_{e^\prime}) \\
&= \int_{G^{\Lambda_i^1}} |\psi((a_e)_{e \in \Lambda_i^1})|^2 \: \d\mu_i((a_e)_{e \in \Lambda_i^1}) \\
&= \|\psi\|_{L^2(\spc{i})}^2,
\end{align*}

\item If $\Lambda_j$ is obtained from $\Lambda_i$ by subdividing an edge $e_0 \in \Lambda_i^1$ into two edges $e_1, e_2 \in \Lambda_j^1$, then
\begin{align*}
& \|u(\psi)\|_{L^2(\spc{j})}^2
= \int_{G^{\Lambda_j^1}} |u(\psi)((a_e)_{e \in \Lambda_j^1})|^2 \: \d\mu_j((a_e)_{e \in \Lambda_j^1}) \\
& \quad = \int_{G} \int_{G} \int_{G^{\Lambda_i^1 \setminus \{e_0\}}} |\psi((a_e)_{e \in \Lambda_i^1 \setminus \{e_0\}}, a_{e_1}a_{e_2})|^2 \: \d\nu((a_e)_{e \in \Lambda_i^1 \setminus \{e_0\}}) \: \d\mu(a_{e_1}) \d\mu(a_{e_2}) \\
& \quad = \int_{G} \int_{G^{\Lambda_i^1 \setminus \{e_0\}}} |\psi((a_e)_{e \in \Lambda_i^1 \setminus \{e_0\}}, a_{e_2})|^2 \: \d\nu((a_e)_{e \in \Lambda_i^1 \setminus \{e_0\}}) \: \d\mu(a_{e_2}) \\
& \quad = \int_{G^{\Lambda_i^1}} |\psi((a_e)_{e \in \Lambda_i^1})|^2 \: \d\mu_i((a_e)_{e \in \Lambda_i^1}) = \|\psi\|_{L^2(\spc{i})}^2,
\end{align*}
since the Haar measure is left-invariant and normalized.
Here, $\mu$ and $\nu$ denote the Haar measures on $G$ and $G^{\Lambda_i^1 \setminus \{e_0\}}$, respectively.\qedhere
\end{itemize}
\end{proof}

\begin{prop}
There exist two covariant functors from $\mathsf{Refine}$ to the category of Hilbert spaces that send a graph $\Lambda_i$ to the spaces $L^2(\spc{i})$ and $L^2(\cG_i \backslash \spc{i})$, and a refinement $(\Lambda_i, \Lambda_j, \iota_{i,j})$ to the linear isometries $u_{i,j}$ and $u^\red_{i,j}$, respectively.
\label{prop:Hilbert_space_refinements}
\end{prop}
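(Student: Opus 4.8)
The plan is to verify directly the two defining axioms of a covariant functor—preservation of identities and preservation of composition—for each of the two assignments, drawing on the results already in hand rather than redoing any analysis. The object part is immediate: each $L^2(\spc{i})$ and each $L^2(\cG_i \backslash \spc{i})$ is a Hilbert space, so objects are sent to objects. For the morphism part, Proposition~\ref{prop:refinement_on_Hilbert_spaces1}(3) already tells us that $u_{i,j}$ and $u^\red_{i,j}$ are linear isometries, hence morphisms in the category of Hilbert spaces; since isometries compose to isometries and the identity is an isometry, the assignments stay within the morphism class, and only functoriality remains to be checked.

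The key point I would isolate is that $u_{i,j}$ is the pullback $(\R_{i,j})^\ast$, and that pullback of functions is itself contravariant, i.e.\ $(f \circ g)^\ast = g^\ast \circ f^\ast$ for composable maps. By Proposition~\ref{prop:refinements_of_conf} the assignment $\Lambda_i \mapsto \spc{i}$, $(\Lambda_i,\Lambda_j,\iota_{i,j}) \mapsto \R_{i,j}$ is a \emph{contravariant} functor, so for the composite refinement $\iota_{i,k} = \iota_{j,k}\circ\iota_{i,j}$ of Remark~\ref{rem:successive_refinement1} one has $\R_{i,k} = \R_{i,j}\circ\R_{j,k}$. Composing the two contravariances yields the covariant law, via the computation
\begin{equation*}
u_{i,k} = (\R_{i,k})^\ast = (\R_{i,j}\circ\R_{j,k})^\ast = (\R_{j,k})^\ast\circ(\R_{i,j})^\ast = u_{j,k}\circ u_{i,j}.
\end{equation*}
Preservation of identities is then immediate: the identity refinement is sent by Proposition~\ref{prop:refinements_of_conf} to $\R_{i,i} = \Id_{\spc{i}}$, so $u_{i,i} = (\Id)^\ast = \Id_{L^2(\spc{i})}$.

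I would then observe that the identical argument applies verbatim to the reduced assignment: we have $u^\red_{i,j} = (\Rr_{i,j})^\ast$, the map $\Rr$ likewise defines a contravariant functor to compact Hausdorff spaces (as established in the corresponding proposition above), and the contravariance of pullback again converts this into the covariant composition law $u^\red_{i,k} = u^\red_{j,k}\circ u^\red_{i,j}$, with identities preserved for the same reason. I do not expect a genuine obstacle here—the content is essentially bookkeeping of variance—but the one point demanding care is precisely that the \emph{double} contravariance (of the configuration-space functor and of the pullback operation) is what produces a \emph{covariant} functor on Hilbert spaces, and that the two earlier propositions supply exactly the contravariant functoriality of $\R$ and $\Rr$ needed to invoke it.
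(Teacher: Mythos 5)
Your proof is correct, and for the reduced functor it takes a genuinely different route from the paper's. For the unreduced case the two arguments coincide: the paper also obtains $u_{i,k} = u_{j,k}\circ u_{i,j}$ from Remark~\ref{rem:successive_refinement1} together with the definition of $\R$, which is exactly the double contravariance you spell out. For the reduced case, however, the paper does not invoke the contravariant functoriality of $\Rr$ (even though this is stated in the unlabeled proposition following diagram~\eqref{eq:commSpaces}); instead it argues at the operator level: it uses that $p_i^\ast$ and $p_j^\ast$ are isometric sections of the averaging maps $p_i$ and $p_j$, that commutativity of the first square of Proposition~\ref{prop:refinement_on_Hilbert_spaces1} yields $u^\red_{i,j} = p_j\circ u_{i,j}\circ p_i^\ast$ and that $u_{i,j}$ carries $\cG_i$-invariant functions to $\cG_j$-invariant ones, and that $p_j^\ast p_j$ is the orthogonal projection onto the invariant functions, so that inserting $p_j^\ast p_j$ into $p_k\circ u_{j,k}\circ u_{i,j}\circ p_i^\ast$ splits the composite into $u^\red_{j,k}\circ u^\red_{i,j}$. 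Your observation that $u^\red_{i,j} = (\Rr_{i,j})^\ast$ reduces the second functor to literally the same one-line pullback computation as the first, which is shorter and arguably more natural given what the paper has already established; what the paper's longer computation buys is the collection of operator identities it establishes en route (the formula $u^\red = p\circ u\circ p^\ast$, preservation of invariant subspaces, and the identification of $p^\ast p$ with the projection onto invariants), which are reused later, for instance in Proposition~\ref{prop:refinement_on_observable_algebras1} and in the cube argument of Proposition~\ref{prop:red_Hamiltonian}. The only point worth making explicit in your write-up is that the identity $(f\circ g)^\ast = g^\ast\circ f^\ast$ is being applied to maps between $L^2$-spaces, where pullback is well defined on equivalence classes precisely because the measures form an exact inverse system (Lemma~\ref{lem:inv_sys_spaces}); this is covered by your citation of Proposition~\ref{prop:refinement_on_Hilbert_spaces1}(3), so it is a remark rather than a gap.
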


\begin{proof}

Let $\Lambda_i$, $\Lambda_j$ and $\Lambda_k$ be three graphs, with corresponding spaces of connections $\spc{i}$, $\spc{j}$ and $\spc{k}$ and gauge groups $\cG_i$, $\cG_j$ and $\cG_k$.
Suppose in addition that we are given refinements $(\Lambda_i, \Lambda_j, \iota_{i,j})$ and $(\Lambda_j, \Lambda_k, \iota_{j,k})$.

We need to prove that the corresponding maps between Hilbert spaces and observable algebras satisfy
\begin{align*}
u_{i,k} &= u_{j,k} \circ u_{i,j}; \\
u^\red_{i,k} & = u^\red_{j,k} \circ u^\red_{i,j}; 
\end{align*}
The fact that $u_{i,k} = u_{j,k} \circ u_{i,j}$ follows from Remark \ref{rem:successive_refinement1} and the definition of the map $\R$.
To prove $u^\red_{i,k} = u^\red_{j,k} \circ u^\red_{i,j}$, note that for $\Lambda_i \leq \Lambda_j$, the maps $p_i^\ast$ and $p_j^\ast$ are isometries by definition of the measure on $\cG_i \backslash \spc{i}$ and $\cG_j \backslash \spc{j}$. Thus $p_i p_i^\ast = \Id_{L^2(\cG_i \backslash \spc{i})}$ and $p_j p_j^\ast = \Id_{L^2(\cG_j \backslash \spc{j})}$.

Commutativity of the first square in Proposition \ref{prop:refinement_on_Hilbert_spaces1} and the fact that $p_i^\ast$ and $p_j^\ast$ are sections of $p_i$ and $p_j$, respectively, imply that $p_i \circ u \circ p_i^\ast = u^\red$, and that $u$ maps $\cG_i$-invariant functions to $\cG_j$-invariant functions.
Observing that $p_i^\ast p_i$ and $p_j^\ast p_j$ are the orthogonal projections onto the spaces of $\cG_i$- and $\cG_j$-invariant functions, respectively, we infer that
\begin{align*}
u^\red_{i,k}
& = p_k \circ u_{i,k} \circ p_i^\ast
= p_k \circ u_{j,k} \circ u_{i,j} \circ p_i^\ast 
 = p_k \circ u_{j,k} \circ p_j^\ast p_j \circ u_{i,j} \circ p_i^\ast
= u^\red_{j,k} \circ u^\red_{i,j},
\end{align*}
which proves the claim.
\end{proof}

\subsection{Observable algebras}
The isometries between the Hilbert spaces constructed in the previous subsection naturally induce maps between the observables. In fact, we have the following:
\begin{prop}
The maps
\begin{equation*}
\begin{array}{lll}
v \colon B_0(L^2(\spc{i})) \rightarrow B_0(L^2(\spc{j})), \quad 
&b \mapsto u b u^\ast; \\
v^\red \colon B_0(L^2(\cG_i \backslash \spc{i})) \rightarrow B_0(L^2(\cG_j \backslash \spc{j})), \quad 
&b \mapsto u^\red b (u^\red)^\ast,
\end{array}
\end{equation*}
are injective $^\ast$-homomorphisms.
\label{prop:injective_star_homomorphisms}
\end{prop}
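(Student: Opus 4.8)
The plan is to verify directly that $v$ is a $\ast$-homomorphism and then read off injectivity, with the whole argument resting on the single fact that $u$ is an isometry, i.e.\ $u^\ast u = \Id_{L^2(\spc{i})}$, which we have from Proposition~\ref{prop:refinement_on_Hilbert_spaces1}(3). The identical argument will then apply to $v^\red$, since $u^\red$ is also an isometry by the same proposition.

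First I would check that $v$ is well defined as a map into $B_0(L^2(\spc{j}))$: for compact $b$ the operator $u b u^\ast$ is compact, because $u$ and $u^\ast$ are bounded and the compact operators form a two-sided ideal in the bounded operators. Linearity is clear, and $\ast$-compatibility is immediate from $(u b u^\ast)^\ast = u b^\ast u^\ast$, so $v(b^\ast) = v(b)^\ast$.

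The one step that genuinely uses the hypothesis is multiplicativity. Here I would compute $v(b_1) v(b_2) = u b_1 u^\ast u b_2 u^\ast = u b_1 b_2 u^\ast = v(b_1 b_2)$, where the middle equality inserts $u^\ast u = \Id$. This is exactly where being an \emph{isometry} (rather than a coisometry or a general partial isometry) is needed: the projection $u u^\ast$ onto the range of $u$ is typically not the identity, so $v$ is in general non-unital, yet the relation $u^\ast u = \Id$ is all that multiplicativity requires.

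For injectivity I would exploit the same relation: applying $u^\ast(\,\cdot\,)u$ to $v(b)$ gives $u^\ast u\, b\, u^\ast u = b$, so $v(b) = 0$ forces $b = 0$, whence $\ker v = \{0\}$. (Alternatively one may invoke simplicity of $B_0(L^2(\spc{i}))$ together with the fact that $v$ is nonzero, e.g.\ on a rank-one projection.) Replacing $u$ by $u^\red$ and the Hilbert spaces $L^2(\spc{\bullet})$ by $L^2(\cG_\bullet \backslash \spc{\bullet})$ throughout yields the statement for $v^\red$ verbatim. There is no substantial obstacle here; the only point to guard against is forgetting that these maps are non-unital and that multiplicativity hinges on $u^\ast u = \Id$ rather than on $u u^\ast = \Id$.
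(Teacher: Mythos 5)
Your proof is correct and follows essentially the same route as the paper: both rest entirely on the relation $u^\ast u = \Id_{L^2(\spc{i})}$ (and its reduced analogue) to obtain multiplicativity and injectivity, with linearity and $\ast$-compatibility noted as immediate. You merely spell out the details (the ideal property of compacts, the explicit cancellation $u^\ast v(b) u = b$) that the paper compresses into ``it readily follows.''
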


\begin{proof}
It is clear that $v$ and $v^\red$ respect its linear structures as well as the involutions.
Since $u$ and $u^\red$ are isometries, we have
\begin{equation*}
u^\ast u = \text{\normalfont Id}_{L^2(\spc{i})}, \quad  \text{\normalfont and} \quad (u^\red)^\ast u^\red = \text{\normalfont Id}_{L^2(\cG_i \backslash \spc{i})},
\end{equation*}
from which it readily follows that the maps $v$ and $v^\red$ are injective and respect the algebra structures.
\end{proof}

\noindent
Thus the maps $u$ and $v$ are embeddings of the `coarse' Hilbert space and observable algebra into the corresponding `finer' structures, respectively, and the maps $u^\red$ and $v^\red$ are their `reduced' counterparts.
We can now formulate the analogue of Proposition \ref{prop:refinement_on_Hilbert_spaces1} for the observable algebras:

\begin{prop}
Define the maps $P_i$, $P_j$, $\Pi_i$ and $\Pi_j$ by
\begin{equation*}
\begin{array}{ll}
P_i \colon B_0(L^2(\spc{i})) \rightarrow B_0(L^2(\cG_i \backslash \spc{i})), \quad 
& b \mapsto p_i b p_i^\ast; \\
P_j \colon B_0(L^2(\spc{j})) \rightarrow B_0(L^2(\cG_j \backslash \spc{j})), \quad 
& b \mapsto p_j b p_j^\ast; \\
\Pi_i \colon B_0(L^2(\cG_i \backslash \spc{i})) \rightarrow B_0(L^2(\spc{i})), \quad 
& b \mapsto p_i^\ast b p_i; \\
\Pi_j \colon B_0(L^2(\cG_j \backslash \spc{j})) \rightarrow B_0(L^2(\spc{j})), \quad 
& b \mapsto p_j^\ast b p_j.
\end{array}
\end{equation*}
Then the following squares

\[
\xymatrix{
B_0(L^2(\spc{i})) \ar[r]^{v} & B_0(L^2(\spc{j})) \\
B_0(L^2(\cG_i \backslash \spc{i})) \ar[r]^{v^\red} \ar[u]^{\Pi_i} & B_0(L^2(\cG_j \backslash \spc{j})) \ar[u]_{\Pi_j}
}
\]
and
\[
\xymatrix{
B_0(L^2(\spc{i})) \ar[d]_{P_i} \ar[r]^{v} & B_0(L^2(\spc{j})) \ar[d]^{P_j}\\
B_0(L^2(\cG_i \backslash \spc{i})) \ar[r]^{v^\red} & B_0(L^2(\cG_j \backslash \spc{j}))
}
\]
commute.

\label{prop:refinement_on_observable_algebras1}
\end{prop}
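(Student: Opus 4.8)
The plan is to reduce both squares to the two commuting squares of Proposition~\ref{prop:refinement_on_Hilbert_spaces1} at the Hilbert-space level, together with part (1) of that proposition. First I would record the two operator identities those squares encode. Using part (1) to identify $p_i^\ast$ with the pullback $\pi_i^\ast$ (and likewise $p_j^\ast$ with $\pi_j^\ast$), the first square reads $u p_i^\ast = p_j^\ast u^\red$, while the second square reads $p_j u = u^\red p_i$. Taking Hilbert-space adjoints of these two relations yields the companion identities $p_i u^\ast = (u^\red)^\ast p_j$ and $u^\ast p_j^\ast = p_i^\ast (u^\red)^\ast$. These four identities are the entire engine of the proof; everything else is substitution.

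For the square involving $\Pi_i$ and $\Pi_j$, which asserts $v \circ \Pi_i = \Pi_j \circ v^\red$, I would evaluate both composites on an arbitrary $b \in B_0(L^2(\cG_i \backslash \spc{i}))$. Going up then right gives $v(\Pi_i(b)) = u\, p_i^\ast\, b\, p_i\, u^\ast$, while going right then up gives $\Pi_j(v^\red(b)) = p_j^\ast\, u^\red\, b\, (u^\red)^\ast\, p_j$. Substituting $u p_i^\ast = p_j^\ast u^\red$ into the front of the first expression and its adjoint $p_i u^\ast = (u^\red)^\ast p_j$ into the back identifies the two expressions termwise, proving commutativity.

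For the square involving $P_i$ and $P_j$, which asserts $P_j \circ v = v^\red \circ P_i$, I would proceed identically with an arbitrary $b \in B_0(L^2(\spc{i}))$: one composite is $P_j(v(b)) = p_j\, u\, b\, u^\ast\, p_j^\ast$ and the other is $v^\red(P_i(b)) = u^\red\, p_i\, b\, p_i^\ast\, (u^\red)^\ast$. Now I would use $p_j u = u^\red p_i$ at the front and its adjoint $u^\ast p_j^\ast = p_i^\ast (u^\red)^\ast$ at the back to match the two sides.

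I do not expect a genuine obstacle here: the mathematical content lives entirely in Proposition~\ref{prop:refinement_on_Hilbert_spaces1}, and what remains is bookkeeping of adjoints. The two points that require a little care are the identification $p_i^\ast = \pi_i^\ast$ furnished by part (1)—this is precisely what lets the first Hilbert-space square be read as a relation among $u$, $u^\red$ and the $p$'s rather than the $\pi^\ast$'s—and keeping straight which of the two squares in the statement is governed by which of the two underlying Hilbert-space relations (the $\Pi$-square by $u p_i^\ast = p_j^\ast u^\red$, the $P$-square by $p_j u = u^\red p_i$). I would also remark that well-definedness of $P_i$, $\Pi_i$ and their $j$-counterparts is automatic, since conjugating a compact operator by the bounded maps $p_i$, $p_i^\ast$ again produces a compact operator.
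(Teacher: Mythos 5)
Your proof is correct and follows essentially the same route as the paper: the paper likewise reduces the $\Pi$-square to the Hilbert-space identity $u\,p_i^\ast = p_j^\ast\, u^\red$ (via the identification $p_i^\ast = \pi_i^\ast$ from part (1) of Proposition~\ref{prop:refinement_on_Hilbert_spaces1}) and passes to adjoints, writing $v \circ \Pi_i(b) = (u p_i^\ast)\, b\, (u p_i^\ast)^\ast = (p_j^\ast u^\red)\, b\, (p_j^\ast u^\red)^\ast = \Pi_j \circ v^\red(b)$. The only difference is that the paper declares the $P$-square ``similar'' while you carry it out explicitly using $p_j u = u^\red p_i$, which is exactly the intended argument.
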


\begin{proof}
We shall only present a proof of commutativity of the first square; commutativity of the second square can be proved in a similar fashion.
Let $b \in B_0(L^2(\cG_i \backslash \spc{i}))$.
Then using the commutativity of the first square in Proposition \ref{prop:refinement_on_Hilbert_spaces1}, we obtain
\begin{align*}
v \circ \Pi_i(b)
& = u p_i^\ast b p_i u^\ast 
= (u p_i^\ast) b (u p_i^\ast)^\ast \\
& = (p_j^\ast u^\red) b (p_j^\ast u^\red)^\ast  
= p_j^\ast u^\red b (u^\red)^\ast p_j
= \Pi_j \circ v^\red(b),
\end{align*}
as desired.
\end{proof}

\begin{prop}
There exist two covariant functors from $\mathsf{Refine}$ to the category of \Cs algebras that send a graph $\Lambda_i$ to the spaces $B_0(L^2(\spc{i}))$ and $B_0(L^2(\cG_i \backslash \spc{i}))$, and a refinement $(\Lambda_i, \Lambda_j, \iota_{i,j})$ to the injective $^\ast$-homomorphisms $v_{i,j}$ and $v^\red_{i,j}$, respectively.
The collections $(B_0(L^2(\spc{i})), v_{i,j})$ and $(B_0(L^2(\cG_i \backslash \spc{i})),v^\red_{i,j})$ form direct systems of \Cs algebras.
\label{prop:dir-syst-compacts}
\end{prop}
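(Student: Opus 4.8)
The plan is to build directly on the functoriality already established at the level of Hilbert spaces in Proposition~\ref{prop:Hilbert_space_refinements}, since the maps $v$ and $v^\red$ are defined purely by conjugation by the isometries $u$ and $u^\red$. Proposition~\ref{prop:injective_star_homomorphisms} already guarantees that each $v_{i,j}$ and $v^\red_{i,j}$ is an injective $\ast$-homomorphism, so the only remaining task is to verify the two functor axioms: preservation of identities and of composition.

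For the identity refinement $(\Lambda_i,\Lambda_i,\Id)$, the induced map $\R$ is the identity on $\spc{i}$, so the isometry $u_{i,i}$ is the identity on $L^2(\spc{i})$; hence $v_{i,i}(b) = u_{i,i}\, b\, u_{i,i}^\ast = b$, and likewise $v^\red_{i,i} = \Id$. For composition I would take three graphs $\Lambda_i$, $\Lambda_j$, $\Lambda_k$ with refinements $(\Lambda_i,\Lambda_j,\iota_{i,j})$ and $(\Lambda_j,\Lambda_k,\iota_{j,k})$, and compute, for $b \in B_0(L^2(\spc{i}))$,
\begin{equation*}
v_{j,k}(v_{i,j}(b)) = u_{j,k}\,u_{i,j}\, b\, u_{i,j}^\ast\, u_{j,k}^\ast = (u_{j,k} u_{i,j})\, b\, (u_{j,k} u_{i,j})^\ast = u_{i,k}\, b\, u_{i,k}^\ast = v_{i,k}(b),
\end{equation*}
where the crucial third equality is precisely the relation $u_{i,k} = u_{j,k}\circ u_{i,j}$ from Proposition~\ref{prop:Hilbert_space_refinements}. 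The identical computation with $u^\red$ in place of $u$, using $u^\red_{i,k} = u^\red_{j,k}\circ u^\red_{i,j}$, yields $v^\red_{j,k}\circ v^\red_{i,j} = v^\red_{i,k}$, which establishes both functors.

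The direct-system claim is then immediate: given a direct system $(\Lambda_i,\iota_{i,j})$ in $\mathsf{Refine}$ indexed over a directed set, applying either functor produces families $(B_0(L^2(\spc{i})),v_{i,j})$ and $(B_0(L^2(\cG_i\backslash\spc{i})),v^\red_{i,j})$ of \Cs algebras with injective connecting $\ast$-homomorphisms satisfying $v_{i,i}=\Id$ and $v_{i,k}=v_{j,k}\circ v_{i,j}$ for $i\le j\le k$, which is exactly the definition of a direct system. I expect no genuine obstacle here, as the essential content has already been absorbed into Propositions~\ref{prop:injective_star_homomorphisms} and~\ref{prop:Hilbert_space_refinements}; the only point deserving a remark is that conjugation is compatible with composition of the $u$'s \emph{despite} each $u$ being a non-unitary isometry, for which $uu^\ast$ is merely a projection. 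This causes no trouble because one conjugates by the \emph{composite} isometry $u_{j,k}u_{i,j}=u_{i,k}$ directly, rather than inserting the intermediate projection $u_{j,k}^\ast u_{j,k}$, so the equality $v_{j,k}\circ v_{i,j}=v_{i,k}$ holds on the nose.
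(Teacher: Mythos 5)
Your proposal is correct and follows essentially the same route as the paper: the paper's proof likewise reduces everything to the relations $v_{i,k} = v_{j,k}\circ v_{i,j}$ and $v^{\red}_{i,k} = v^{\red}_{j,k}\circ v^{\red}_{i,j}$, obtained by conjugating with the composite isometries whose compatibility $u_{i,k} = u_{j,k}\circ u_{i,j}$ was established in Proposition \ref{prop:Hilbert_space_refinements}, with injectivity already supplied by Proposition \ref{prop:injective_star_homomorphisms}. Your explicit check of identity preservation and your closing remark on why non-unitarity of the isometries causes no trouble are small additions beyond the paper's terser argument, but the substance is identical.
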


\begin{proof}
This follows from the fact that
\begin{equation*}
v_{i,k} = v_{j,k} \circ v_{i,j}; \quad
v^\red_{i,k} = v^\red_{j,k} \circ v^\red_{i,j},
\end{equation*}
which is a direct consequence of Proposition \ref{prop:Hilbert_space_refinements}.
\end{proof}

\noindent
We are also interested in describing the refinements of the observable algebras in purely geometric terms, that is to say, in terms of the pair groupoids $\sfG_i = \spc{i} \times \spc{i}$ that we associated to a graph $\Lambda_i$.
A map $R_{i,j} \colon \spc{j} \to \spc{i}$ canonically gives rise to a groupoid morphism $\sfR_{i,j} = \left( \sfR_{i,j}^{(0)}, \sfR_{i,j}^{(1)} \right) \colon \sfG_{j} \to \sfG_{i}$, where $\sfR_{i,j}^{(0)} = R_{i,j}$ and $\sfR_{i,j}^{(1)} = R_{i,j} \times R_{i,j}$.
Similarly, we obtain a groupoid morphism $\sfRr_{i,j} \colon \sfG^\red_j \rightarrow \sfG^\red_i$ between the pair groupoids associated to the reduced configuration spaces.
The following proposition is then an immediate consequence of Proposition \ref{prop:refinements_of_conf}:

\begin{prop}
There exist contravariant functors from $\mathsf{Refine}$ to the category of groupoids that send a graph $\Lambda_i$ to the groupoids $\sfG_i$ and $\sfG^\red_i$, and a refinement $(\Lambda_i, \Lambda_j, \iota_{i,j})$ to the groupoid morphisms $\sfR_{i,j}$ and $\sfRr_{i,j}$.
\label{prop:refinement_of_groupoids1}
\end{prop}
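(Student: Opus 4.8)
The plan is to reduce everything to the contravariant functoriality of the configuration-space assignment already established in Proposition \ref{prop:refinements_of_conf}, together with the elementary observation that forming pair groupoids is compatible with composition of maps.

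First I would check that the assignment $R_{i,j} \mapsto \sfR_{i,j} = (R_{i,j}, R_{i,j} \times R_{i,j})$ really lands in the category of groupoids, i.e.\ that $\sfR_{i,j}$ is a groupoid morphism. For the pair groupoid on a space $X$, the source and target of a morphism $(x_1, x_2)$ are $x_1$ and $x_2$, composable pairs have the form $((x_1, x_2), (x_2, x_3))$ with composite $(x_1, x_3)$, the inverse is $(x_2, x_1)$, and the unit at $x$ is $(x,x)$. Since $R_{i,j} \times R_{i,j}$ acts componentwise, it commutes with each of these operations and intertwines the object map $R_{i,j}$ with the source and target; hence $\sfR_{i,j}$ is a genuine groupoid morphism. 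This is the only step requiring any verification, and it is routine: it is a general fact that any map $f \colon X \to Y$ induces a morphism $(f, f \times f)$ between the associated pair groupoids.

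Next I would establish functoriality. Given composable refinements $(\Lambda_i, \Lambda_j, \iota_{i,j})$ and $(\Lambda_j, \Lambda_k, \iota_{j,k})$, Proposition \ref{prop:refinements_of_conf} yields $R_{i,k} = R_{i,j} \circ R_{j,k}$ at the level of configuration spaces. On objects this is exactly $\sfR^{(0)}_{i,k} = \sfR^{(0)}_{i,j} \circ \sfR^{(0)}_{j,k}$, while on morphisms
\begin{align*}
\sfR^{(1)}_{i,k}
&= R_{i,k} \times R_{i,k} = (R_{i,j} \circ R_{j,k}) \times (R_{i,j} \circ R_{j,k}) \\
&= (R_{i,j} \times R_{i,j}) \circ (R_{j,k} \times R_{j,k}) = \sfR^{(1)}_{i,j} \circ \sfR^{(1)}_{j,k},
\end{align*}
using $(g \circ f) \times (g \circ f) = (g \times g) \circ (f \times f)$. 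Together with $R_{i,i} = \Id_{\spc{i}}$, which forces $\sfR_{i,i} = \Id_{\sfG_i}$, this shows that $\Lambda_i \mapsto \sfG_i$ and $(\Lambda_i, \Lambda_j, \iota_{i,j}) \mapsto \sfR_{i,j}$ is a contravariant functor from $\mathsf{Refine}$ to groupoids.

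Finally, the reduced statement is proved verbatim: replacing $R_{i,j}$ by $\Rr_{i,j}$ and $\spc{i}$ by $\cG_i \backslash \spc{i}$, and invoking the contravariant functoriality of $\Rr$ (the reduced analogue of Proposition \ref{prop:refinements_of_conf}) in place of that of $R$, one finds that $\Lambda_i \mapsto \sfG^\red_i$ and $(\Lambda_i, \Lambda_j, \iota_{i,j}) \mapsto \sfRr_{i,j}$ is likewise a contravariant functor. I do not expect any genuine obstacle here, since the content is entirely bookkeeping; the one point deserving a sentence of justification is that $(R, R \times R)$ respects the pair-groupoid structure, which is what makes the whole statement an immediate consequence of the functoriality already proved for configuration spaces.
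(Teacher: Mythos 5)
Your proposal is correct and follows exactly the paper's route: the paper defines $\sfR_{i,j} = (R_{i,j}, R_{i,j}\times R_{i,j})$ in the paragraph preceding the proposition and then declares the result an immediate consequence of Proposition \ref{prop:refinements_of_conf}, which is precisely the reduction you carry out. You merely make explicit the routine verifications (that $(f, f\times f)$ is a pair-groupoid morphism and that $\times$ respects composition) that the paper leaves unstated.
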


\noindent 
More interestingly, the maps $\sfR_{i,j}$ induce a map $\sfR_{i,j}^\ast$ between the groupoid \Cs algebras $C^\ast(\sfG_i)$ and $C^\ast(\sfG_j)$, given simply by pull-back.
We will show that it coincides with $v_{i,j}=v$ from Proposition \ref{prop:injective_star_homomorphisms}, after identifying $C^\ast(\sfG_i) \simeq B_0(L^2(\spc{i}))$, using the isomorphism induced by the map defined in Equation \eqref{eq:integral-operator}.

\begin{prop}
The following diagram
\begin{equation}
\xymatrix{
C^\ast(\sfG_i)\ar[d]_{\simeq} \ar[r]_{\sfR_{i,j}^\ast} & C^*(\sfG_j) \ar[d]^{\simeq} \\
B_0(L^2(\spc{i})) \ar[r]_{v_{i,j}} & B_0(L^2(\spc{j}))
}
\end{equation}
commutes.
\end{prop}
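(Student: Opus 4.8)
The plan is to verify the identity on the dense $\ast$-subalgebra $C_c(\spc{i} \times \spc{i}) \subset C^\ast(\sfG_i)$ and then extend by continuity. So I would fix $h \in C_c(\spc{i} \times \spc{i})$ and trace it around both sides of the square. Going down first, the bottom isomorphism sends $h$ to the integral operator $T_h$ of Equation \eqref{eq:integral-operator}, after which $v_{i,j}$ produces $u\, T_h\, u^\ast \in B_0(L^2(\spc{j}))$, where $u = u_{i,j} = (\R_{i,j})^\ast$. Going right first, the pull-back $\sfR_{i,j}^\ast$ sends $h$ to $h \circ (\R_{i,j} \times \R_{i,j})$, which is again continuous on the compact space $\spc{j} \times \spc{j}$ and hence lies in $C_c(\spc{j} \times \spc{j})$; the right-hand isomorphism then turns it into the integral operator $T_{h \circ (\R_{i,j} \times \R_{i,j})}$. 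Commutativity of the square therefore reduces to the single operator identity
\begin{equation*}
T_{h \circ (\R_{i,j} \times \R_{i,j})} = u\, T_h\, u^\ast .
\end{equation*}

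To establish this identity I would evaluate both operators on an arbitrary $\psi \in L^2(\spc{j})$ and at an arbitrary point $x \in \spc{j}$. Since $(u \phi)(x) = \phi(\R_{i,j}(x))$, the right-hand side equals $(T_h\, u^\ast \psi)(\R_{i,j}(x)) = \int_{\spc{i}} h(w, \R_{i,j}(x))\,(u^\ast \psi)(w)\,\d\mu_i(w)$. The cleanest way to handle the adjoint $u^\ast$ without computing it explicitly (e.g.\ as a fibre integral) is to read this off as a pairing: writing $k_x(w) := h(w, \R_{i,j}(x))$, the integral is $\inp{\overline{k_x}}{u^\ast \psi}_{L^2(\spc{i})}$, and the defining property of the adjoint gives $\inp{\overline{k_x}}{u^\ast \psi}_{L^2(\spc{i})} = \inp{u\,\overline{k_x}}{\psi}_{L^2(\spc{j})}$. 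Unwinding the latter inner product and using $(u\,\overline{k_x})(y) = \overline{k_x(\R_{i,j}(y))}$ returns
\begin{equation*}
(u\, T_h\, u^\ast \psi)(x) = \int_{\spc{j}} h(\R_{i,j}(y), \R_{i,j}(x))\,\psi(y)\,\d\mu_j(y),
\end{equation*}
which is exactly $(T_{h \circ (\R_{i,j} \times \R_{i,j})} \psi)(x)$, as required. The only nontrivial input here is that $u$ really is an isometry with a well-behaved adjoint, which rests on the measure compatibility $(\R_{i,j})_\ast \mu_j = \mu_i$ from Lemma \ref{lem:inv_sys_spaces}; this is also what makes the pull-back along $\R_{i,j} \times \R_{i,j}$ interact correctly with the reference measures $\mu_i$ and $\mu_j$.

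Finally I would promote the identity from $C_c$ to all of $C^\ast(\sfG_i)$. Because $v_{i,j}$ is a bounded $\ast$-homomorphism (Proposition \ref{prop:injective_star_homomorphisms}) and the vertical maps are isometric isomorphisms, the assignment $h \mapsto \sfR_{i,j}^\ast h$ is bounded for the $C^\ast(\sfG_i)$-norm on the dense subalgebra $C_c(\spc{i} \times \spc{i})$, so it extends continuously to all of $C^\ast(\sfG_i)$; the extension necessarily agrees with the map obtained by conjugating $v_{i,j}$ by the vertical isomorphisms, which is precisely the assertion of the proposition. I expect the main point requiring care to be exactly this bookkeeping of the adjoint and the conjugations --- keeping the argument of $\overline{k_x}$ straight so that no spurious complex conjugate survives, and confirming that the pull-back, a priori defined only on the convolution algebra, is the continuous extension delivered by the diagram rather than something needing a separate well-definedness check.
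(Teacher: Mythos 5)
Your proof is correct, but it takes a genuinely different route from the paper's. The paper decomposes the refinement into elementary refinements (addition of an edge, subdivision of an edge), computes the adjoint $u^\ast$ explicitly in each case (as a fibre integral over the added copy of $G$, respectively as a group-translation average), and then verifies $v(T_h) = T_{\bar h}$ with $\bar h = h \circ \sfR^{(1)}_{i,j}$ by direct Haar-measure manipulations (Fubini plus left invariance), finally appealing to the decomposition to cover general refinements. You instead treat an arbitrary refinement in one stroke: the only input is the measure compatibility $(\R_{i,j})_\ast \mu_j = \mu_i$ of Lemma \ref{lem:inv_sys_spaces}, used once to make $u$ a well-defined isometry and once (through the adjoint pairing $\inp{\overline{k_x}}{u^\ast\psi} = \inp{u\overline{k_x}}{\psi}$) to convert the integral over $\spc{i}$ into an integral over $\spc{j}$, which is exactly the kernel $h \circ (\R_{i,j} \times \R_{i,j})$. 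This buys generality and brevity: no case analysis, no explicit formula for $u^\ast$, no reliance on the independence of the elementary decomposition, and the same argument works for any measure-compatible continuous map between compact measure spaces. What the paper's computation buys in exchange is concreteness: it exhibits the transformed kernels $\bar h$ explicitly in the two elementary cases, in the same computational style used for the isometry proof and for the Hamiltonian in Section \ref{subsec:The_Hamiltonian}. Your closing extension step is also sound, and is in fact the cleanest way to see that the pull-back, a priori defined only on $C_c(\sfG_i^{(1)})$, is bounded for the \Cs norms and hence defines $\sfR_{i,j}^\ast$ on all of $C^\ast(\sfG_i)$: on the dense subalgebra it coincides with the conjugation of the isometric $^\ast$-homomorphism $v_{i,j}$ by the vertical isomorphisms, so it extends uniquely and the extended square commutes.
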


\begin{proof}
With $u_{i,j}: L^2(\spc{i}) \to L^2(\spc{j})$ as defined in Equation \eqref{eq:isometry}, we have to establish that
\begin{equation*}
u_{i,j} (T_h) u_{i,j}^* = T_{R_{i,j}^\ast(h)}.
\end{equation*}
We will prove this for the elementary refinements of Section \ref{sect:elementary_refinements} where we assume for simplicity that $\Lambda_i$ is a graph with one edge $e$, and that $\Lambda_j$ is a graph with two edges $e_1$ and $e_2$ such that $(e_1,e_2)$ is a path in $\Lambda_j$. 
Let $\psi \in L^2(\spc{j}), \varphi \in L^2(\spc{i})$.
\begin{itemize}
\item If $\Lambda_j$ is obtained from $\Lambda_i$ by adding the edge $e_2 \in \Lambda_i^1$ 
 then we have
\begin{align*}
\inp{\varphi}{u^\ast \psi}_{L^2(\spc{i})}
&= \inp{u \varphi}{\psi}_{L^2(\spc{j})} \\
&= \int_{G^2} \overline{u(\varphi)(a_{e_1}, a_{e_2})} \psi(a_{e_1}, a_{e_2}) \: \d\mu_{G^2}(a_{e_1}, a_{e_2}) \\
&= \int_G \int_G \overline{\varphi(a_{e_1})} \psi(a_{e_1}, a_{e_2}) \: \d\mu(a_{e_2}) \: \d\mu(a_{e_1}) \\
&= \int_{G^{\Lambda_i^1}} \overline{\varphi(a_e)} \int_G \psi(a_e, a_{e_2}) \: \d\mu(a_{e_2}) \: \d\mu(a_e),
\end{align*}
so
\begin{equation*}
u^\ast \psi(a_e)
= \int_G \psi(a_e, a_{e_2}) \: \d\mu(a_{e_2}).
\end{equation*}
Next, let $h \in C(\sfG_i^{(1)})$, and let $(a_{e_1}, a_{e_2}) \in \spc{j}$.
It follows that
\begin{align*}
&v(T_h)(\psi)(a_{e_1}, a_{e_2})
= u T_h u^\ast(\psi)(a_{e_1}, a_{e_2})
= T_h u^\ast(\psi)(a_{e_1}) \\
&\quad = \int_{G^{\Lambda_i^1}} h(b_{e_1}, a_{e_1}) u^\ast(\psi)(b_{e_1}) \: \d\mu_i(b_{e_1}) \\
&\quad = \int_G \int_G h(b_{e_1}, a_{e_1}) \psi(b_{e_1}, b_{e_2}) \: \d\mu(b_{e_1}) \: \d\mu(b_{e_2}) \\
&\quad = \int_{G^{\Lambda_j^1}} h(b_{e_1}, a_{e_1}) \psi(b_{e_1}, b_{e_2}) \: \d\mu_j(b_{e_1}, b_{e_2}).
\end{align*}
Thus, if we define the function $\bar{h} \in C(\sfG_j^{(1)})$ by
\begin{equation*}
\bar{h}((b_{e_1}, b_{e_2}), (a_{e_1}, a_{e_2}))
= h(b_{e_1}, a_{e_1}),
\end{equation*}
and define the corresponding integral operator $T_{\bar{h}} \in B_0(L^2(\spc{}))$, then we obtain $v(T_h) = T_{\bar{h}}$. It is not difficult to see that $\bar{h}$ is the composition of $h$ with the function 
$\sfR^{(1)} \colon \sfG_j^{(1)} \rightarrow \sfG_i^{(1)}$ given in this case by
\begin{equation*}
\sfR^{(1)} \left((b_{e_1}, b_{e_2}), (a_{e_1}, a_{e_2}) \right)= (\R \times \R)((b_{e_1}, b_{e_2}), (a_{e_1}, a_{e_2}))= (b_{e_1}, a_{e_1})
\end{equation*}

\item Now suppose that $\Lambda_j$ is obtained from $\Lambda_i$ by subdividing the edge $e \in \Lambda_i^1$ into the two edges $e_1$ and $e_2 \in \Lambda_j^1$.
Then
\begin{align*}
\inp{\varphi}{u^\ast \psi}_{L^2(\spc{i})}
&= \inp{u \varphi}{\psi}_{L^2(\spc{j})} \\
&= \int_{G^2} \overline{u(\varphi)(a_{e_1}, a_{e_2})} \psi(a_{e_1}, a_{e_2}) \: \d\mu_{G^2}(a_{e_1}, a_{e_2}) \\
&= \int_G \int_G \overline{\varphi(a_{e_1} a_{e_2})} \psi(a_{e_1}, a_{e_2}) \: \d\mu(a_{e_2}) \: \d\mu(a_{e_1}) \\
&= \int_G \overline{\varphi(a_{e_2})} \int_G \psi(a_{e_1}, a_{e_1}^{-1} a_{e_2}) \: \d\mu(a_{e_1}) \: \d\mu(a_{e_2}),
\end{align*}
so
\begin{equation*}
u^\ast \psi(a_e)
= \int_G \psi(a^\prime, (a^\prime)^{-1} a_e) \: \d\mu(a^\prime).
\end{equation*}
Next, let $h \in C(\sfG_i^{(1)})$, and let $(a_{e_1}, a_{e_2}) \in \sfG_j^{(0)}$.
Left-invariance of the Haar measure now implies
\begin{align*}
&v(T_h)(\psi)(a_{e_1}, a_{e_2})
= u T_h u^\ast(\psi)(a_{e_1}, a_{e_2})
= T_h u^\ast(\psi)(a_{e_1} a_{e_2}) \\
&\quad = \int_G h(b_{e_2}, a_{e_1} a_{e_2}) u^\ast(\psi)(b_{e_2}) \: \d\mu(b_{e_2}) \\
&\quad = \int_G \int_G h(b_{e_2}, a_{e_1} a_{e_2}) \psi(b_{e_1}, b_{e_1}^{-1} b_{e_2}) \: \d\mu(b_{e_1}) \: \d\mu(b_{e_2}) \\
&\quad = \int_G \int_G h(b_{e_1} b_{e_2}, a_{e_1} a_{e_2}) \psi(b_{e_1}, b_{e_2}) \: \d\mu(b_{e_1}) \: \d\mu(b_{e_2}) \\
&\quad = \int_{G^{\Lambda_j^1}} h(b_{e_1} b_{e_2}, a_{e_1} a_{e_2}) \psi(b_{e_1}, b_{e_2}) \: \d\mu_j(b_{e_1}, b_{e_2}).
\end{align*}
Thus, if we define the function $\bar{h} \in C(\sfG_j^{(1)})$ by
\begin{equation*}
\bar{h}((b_{e_1}, b_{e_2}), (a_{e_1}, a_{e_2}))
= h(b_{e_1} b_{e_2}, a_{e_1} a_{e_2}),
\end{equation*}
and define the corresponding integral operator $T_{\bar{h}} \in B_0(L^2(\spc{}))$, then we obtain $v(T_h) = T_{\bar{h}}$.
Again, the map $\bar{h}$ is the composition of $h$ with the function $\sfR^{(1)} \colon \sfG_j^{(1)} \rightarrow \sfG_i^{(1)}$ which in this case given by 
\begin{equation*}
\sfR^{(1)}((b_{e_1}, b_{e_2}), (a_{e_1}, a_{e_2}))
= (R \times \R)((b_{e_1}, b_{e_2}), (a_{e_1}, a_{e_2}))=(b_{e_1} b_{e_2}, a_{e_1} a_{e_2}).
\end{equation*}
\end{itemize}
This completes the proof.
\end{proof}

\noindent 
The statement can readily be modified for the groupoid \Cs algebras of the reduced groupoids.
We summarize the results obtained in this subsection in the following:

\begin{thrm}
\label{prop:refinement_of_grpd_alg}
The collections $\left( C^\ast(\sfG_i), \sfR_{i,j}^\ast \right)$
and $\left(C^\ast(\cG_i \backslash \sfG_i), {\sfRr_{i,j}}^\ast \right)$
with connecting maps induced by the maps $R_{i,j}$ and $\Rr_{i,j}$, respectively, form direct systems of \Cs algebras.
Moreover, these direct systems of \Cs algebras are isomorphic to the direct systems described in Proposition \ref{prop:dir-syst-compacts}.
\end{thrm}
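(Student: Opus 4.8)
The plan is to obtain the whole statement from the preceding proposition by transport of structure, so that no further integral computations are needed for the unreduced case. Write $\Phi_i \colon C^\ast(\sfG_i) \xrightarrow{\ \simeq\ } B_0(L^2(\spc{i}))$ for the isomorphism of \eqref{eq:groupoid-C*-compacts}, induced by $h \mapsto T_h$ as in \eqref{eq:integral-operator}, and $\Phi_i^\red$ for its reduced counterpart attached to $\cG_i \backslash \sfG_i$ with the pushforward Haar measure $\mu_{\cG_i}$. The preceding proposition says precisely that for every refinement $(\Lambda_i, \Lambda_j, \iota_{i,j})$ the identity $\Phi_j \circ \sfR_{i,j}^\ast = v_{i,j} \circ \Phi_i$ holds, and the accompanying remark asserts that the analogue $\Phi_j^\red \circ {\sfRr_{i,j}}^\ast = v^\red_{i,j} \circ \Phi_i^\red$ holds in the reduced case. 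Both were checked for the elementary refinements of Section \ref{sect:elementary_refinements}, and pass to arbitrary refinements by the decomposition discussed there together with functoriality of pull-back.

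First I would settle that the connecting maps are $\ast$-homomorphisms. A priori the pull-back $\sfR_{i,j}^\ast$, i.e.\ composition of functions with the map $\sfR_{i,j}^{(1)}$ on morphism spaces, is defined only on the dense $\ast$-subalgebra $C(\sfG_i^{(1)})$ of continuous functions, and it is not obvious that pull-back along a groupoid morphism respects the convolution product. This is exactly the content of the preceding proposition: on that dense subalgebra $\sfR_{i,j}^\ast$ agrees, through $\Phi_i$ and $\Phi_j$, with the bounded $\ast$-homomorphism $v_{i,j}$, so it extends uniquely and continuously to all of $C^\ast(\sfG_i)$, namely as $\sfR_{i,j}^\ast = \Phi_j^{-1} \circ v_{i,j} \circ \Phi_i$. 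Since $\Phi_i, \Phi_j$ are $\ast$-isomorphisms and $v_{i,j}$ is an injective $\ast$-homomorphism by Proposition \ref{prop:injective_star_homomorphisms}, so is $\sfR_{i,j}^\ast$; the same reasoning applies verbatim to ${\sfRr_{i,j}}^\ast$ via $\Phi_i^\red$ and $v^\red_{i,j}$.

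Next I would check compatibility under composition, i.e.\ the functoriality $\sfR_{i,k}^\ast = \sfR_{j,k}^\ast \circ \sfR_{i,j}^\ast$ for $i \leq j \leq k$. Conjugating the relation $v_{i,k} = v_{j,k} \circ v_{i,j}$ of Proposition \ref{prop:dir-syst-compacts} by the $\Phi$'s and using the intertwining identities of the first paragraph yields it at once; alternatively it follows directly from the contravariant functoriality of $\sfR_{i,j}$ in Proposition \ref{prop:refinement_of_groupoids1} together with $(\sfR_{i,j} \circ \sfR_{j,k})^\ast = \sfR_{j,k}^\ast \circ \sfR_{i,j}^\ast$, routing the argument through $v$ being what additionally supplies the isomorphism of systems for free. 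Combined with the previous paragraph this shows that $(C^\ast(\sfG_i), \sfR_{i,j}^\ast)$ and $(C^\ast(\cG_i \backslash \sfG_i), {\sfRr_{i,j}}^\ast)$ are direct systems of \Cs algebras.

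Finally, for the ``moreover'' clause, the family $(\Phi_i)_{i \in I}$ consists of $\ast$-isomorphisms which, by the intertwining relations, satisfy $\Phi_j \circ \sfR_{i,j}^\ast = v_{i,j} \circ \Phi_i$ for every comparable pair $i \leq j$; this is by definition an isomorphism of direct systems onto $(B_0(L^2(\spc{i})), v_{i,j})$, and $(\Phi_i^\red)_{i \in I}$ likewise gives an isomorphism onto $(B_0(L^2(\cG_i \backslash \spc{i})), v^\red_{i,j})$. I expect no serious obstacle; the only point needing genuine attention is the reduced analogue of the preceding proposition flagged in the remark, where one repeats the integral-operator computation on the quotient $\cG_i \backslash \spc{i}$ with the measure $\mu_{\cG_i}$, the equivariance of $\R_{i,j}$ recorded in \eqref{eq:commSpaces} ensuring both that $\sfRr_{i,j}$ is well defined and that $\Phi_i^\red$ again intertwines ${\sfRr_{i,j}}^\ast$ with $v^\red_{i,j}$.
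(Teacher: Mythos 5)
Your proposal is correct and follows essentially the same route as the paper: the theorem there is stated as a summary whose proof is exactly the assembly you describe, namely transporting the direct-system structure of $(B_0(L^2(\spc{i})), v_{i,j})$ from Proposition \ref{prop:dir-syst-compacts} through the isomorphisms $h \mapsto T_h$ of Equation \eqref{eq:groupoid-C*-compacts}, using the commuting square $\Phi_j \circ \sfR_{i,j}^\ast = v_{i,j} \circ \Phi_i$ established for elementary refinements, with the reduced case handled by the noted modification. Your extra observation that the intertwining relation is what makes the pull-back $\sfR_{i,j}^\ast$ a well-defined $\ast$-homomorphism on all of $C^\ast(\sfG_i)$ is a point the paper leaves implicit, but it is the same argument.
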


\subsection{The Hamiltonian}
\label{subsec:The_Hamiltonian}

\noindent
Suppose again that we have fixed two graphs $\Lambda_i$, $\Lambda_j$ together with a refinement $(\Lambda_i, \Lambda_j, \iota)$.
Consider the Hamiltonians
\begin{equation*}
H_{0,i} = \sum_{e \in \Lambda_i^1} -\frac{1}{2} I_{i,e} \Delta_e, \quad \text{ and } \quad
H_{0,j} = \sum_{e \in \Lambda_j^1} -\frac{1}{2} I_{j,e} \Delta_e,
\end{equation*}
let $\cH_i := L^2(\spc{i})$, let $\cH_j := L^2(\spc{j})$ and let $u \colon \cH_i \rightarrow \cH_j$ and $u_\red \colon \cH_i^{\cG_i} \rightarrow \cH_j^{\cG_j}$ be the maps between the corresponding Hilbert spaces.

\begin{prop}
\label{prop:red_Hamiltonian}
Suppose that for each $e \in \Lambda_i^1$, we have
\begin{equation}
I_{i,e} = \sum_{k = 1}^n I_{j,e_k},
\label{eq:path_lengths}
\end{equation}
where $\iota^{(1)}(e) = (e_1,e_2,\ldots,e_n)$.
\begin{myenum}
\item We have $u(\dom(H_{i,0})) \subseteq \dom(H_{j,0})$, and the following diagram
\begin{equation*}
\xymatrix{
\dom(H_{0,i}) \ar[r]^{u} \ar[d]_{H_{0,i}} & \dom(H_{0,j}) \ar[d]^{H_{0,j}} \\
\cH_i \ar[r]^{u} & \cH_j 
}
\end{equation*}
is commutative.

\item We have $u(\dom(H_{0,i}) \cap \cH_i^{\cG_i}) \subseteq \dom(H_{0,j}) \cap \cH_j^{\cG_j}$, and the following diagram
\begin{equation*}
\xymatrix{
\dom(H_{0,i}) \cap \cH_i^{\cG_i} \ar[d]_{H^\red_{0,i}}\ar[r]^{u_\red} & \dom(H_{0,j}) \cap \cH_j^{\cG_j} \ar[d]^{H^\red_{0,j}} \\
\cH_i^{\cG_i} \ar[r]^{u_\red} & \cH_j^{\cG_j} 
}
\end{equation*}
is commutative, where $H^\red_{0,i}$ denotes the restriction of $H_{0,i}$ to $\dom(H_{0,i}) \cap \cH_i^{\cG_i}$, and $H^\red_{0,j}$ is defined analogously.
\end{myenum}
\end{prop}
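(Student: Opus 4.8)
The plan is to reduce to the two types of elementary refinement of Section~\ref{sect:elementary_refinements}, to verify the intertwining relation $H_{0,j}\circ u = u\circ H_{0,i}$ on the dense core $C^\infty(\spc{i})$, and then to propagate it to the full domains by a closedness argument. The reduction is legitimate: by Proposition~\ref{prop:Hilbert_space_refinements} the isometries compose as $u_{i,k}=u_{j,k}\circ u_{i,j}$, so the commuting squares of (1) and (2) stack under composition of refinements, and the hypothesis \eqref{eq:path_lengths} is additive under composition (the total moment of inertia over a concatenated path is the sum of the totals over its pieces). Hence it suffices to treat a single added edge and a single subdivided edge, where I would work first on smooth functions, observing that $u=\R^\ast$ maps $C^\infty(\spc{i})$ into $C^\infty(\spc{j})\subseteq\dom(H_{0,j})$ and that $C^\infty(\spc{i})$ is a core for $H_{0,i}$ by definition of the latter as a closure. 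When $\Lambda_j$ is obtained by adjoining an edge $e_0$, the function $u\psi$ is independent of the coordinate $a_{e_0}$, so $\Delta_{e_0}(u\psi)=0$; each surviving edge is a length-one path, so $I_{i,e}=I_{j,e}$ by \eqref{eq:path_lengths}, and since $\Delta_e$ acts on the untouched coordinate it commutes with $u$, giving $H_{0,j}(u\psi)=u(H_{0,i}\psi)$ immediately.

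The substance of part (1) is the subdivision $e_0\mapsto(e_1,e_2)$, and the one genuinely non-formal ingredient — the step I expect to be the main obstacle — is the behaviour of the Casimir under the multiplication map $m\colon G\times G\to G$, $m(g,h)=gh$. For $\phi\in C^\infty(G)$ and $f=\phi\circ m$ I would prove the identity $\Delta_1 f=\Delta_2 f=(\Delta\phi)\circ m$, where $\Delta_1,\Delta_2$ denote the Laplacian in the first and second argument respectively. This is exactly where bi-invariance of $\Delta$ is used: expressing $\Delta$ through right-invariant vector fields in the first slot and left-invariant ones in the second, each differentiation of $f$ reproduces the corresponding invariant derivative of $\phi$ evaluated at $gh$, because $m(\exp(tX)g,h)=\exp(tX)\,m(g,h)$ and $m(g,h\exp(tX))=m(g,h)\exp(tX)$. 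Applying this in the coordinates $a_{e_1},a_{e_2}$ of $u\psi$, with the remaining coordinates treated as parameters, yields $(I_{j,e_1}\Delta_{e_1}+I_{j,e_2}\Delta_{e_2})(u\psi)=(I_{j,e_1}+I_{j,e_2})\,u(\Delta_{e_0}\psi)$, and the hypothesis \eqref{eq:path_lengths}, here $I_{i,e_0}=I_{j,e_1}+I_{j,e_2}$, is precisely what identifies the right-hand side with $u(I_{i,e_0}\Delta_{e_0}\psi)$; the other edges are handled as in the addition case.

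To pass from the core to the full domains I would invoke closedness of $H_{0,j}$ together with boundedness of $u$: for $\psi\in\dom(H_{0,i})$ pick $\psi_n\in C^\infty(\spc{i})$ with $\psi_n\to\psi$ and $H_{0,i}\psi_n\to H_{0,i}\psi$; then $u\psi_n\to u\psi$ and $H_{0,j}(u\psi_n)=u(H_{0,i}\psi_n)\to u(H_{0,i}\psi)$, so $u\psi\in\dom(H_{0,j})$ and $H_{0,j}(u\psi)=u(H_{0,i}\psi)$, which is part (1).

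Part (2) should then follow with essentially no further analysis. The equivariance $(\iota^{(0)})^\ast(g)\cdot\R(a)=\R(g\cdot a)$ established in the construction of $\Rr_{i,j}$ shows that $u$ sends $\cG_i$-invariant functions to $\cG_j$-invariant functions — this is the fact, already used in the proof of Proposition~\ref{prop:Hilbert_space_refinements}, that $u$ restricts to $u_\red$ on the invariant subspaces. Combining this with part (1) gives $u(\dom(H_{0,i})\cap\cH_i^{\cG_i})\subseteq\dom(H_{0,j})\cap\cH_j^{\cG_j}$, and since $H^\red_{0,i}$ and $H^\red_{0,j}$ are by definition the restrictions of $H_{0,i}$ and $H_{0,j}$ to these subspaces, the commuting square of (1) restricts verbatim to the commuting square of (2).
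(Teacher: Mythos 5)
Your proof is correct and, for part (1), follows essentially the same route as the paper: reduce to the two elementary refinements of Section \ref{sect:elementary_refinements}, verify the intertwining relation on smooth functions --- where the added-edge case is trivial and the subdivision case rests on the identity $\Delta_1(\phi\circ m)=\Delta_2(\phi\circ m)=(\Delta\phi)\circ m$, which is precisely the left/right invariance of the Laplacian that the paper invokes in one line --- and then extend to the full domains by closedness; the paper compresses this last step into a single sentence, whereas you spell out the approximation argument, which is the more careful presentation. The one genuine divergence is in part (2): the paper proves it via a cube of commutative faces built from the projections $p_i=p_{\cH_i^{\cG_i}}$, using the second square of Proposition \ref{prop:refinement_on_Hilbert_spaces1} for the side faces, Proposition \ref{prop:reduction_of_the_electric_part_of_the_Hamiltonian} for the front and rear faces, and the surjectivity of $p_i\colon\dom(H_{0,i})\to\dom(H_{0,i})\cap\cH_i^{\cG_i}$ to conclude that the bottom face commutes; you instead observe that $u$ preserves gauge invariance (by the equivariance of $\R$ under $(\iota^{(0)})^\ast$, as recorded in the proof of Proposition \ref{prop:Hilbert_space_refinements}), so that the square of part (2) is literally the restriction of the square of part (1) to the invariant subspaces. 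Your restriction argument is more economical and avoids the surjectivity input entirely; the paper's cube has the mild advantage of exhibiting the reduced diagram as the image of the unreduced one under the projections, in keeping with how the reduced objects were constructed. One small point to make explicit: your reduction to elementary refinements tacitly requires assigning moments of inertia to the intermediate graphs so that \eqref{eq:path_lengths} holds at every elementary step (e.g.\ declare the moment of an intermediate edge to be the sum over its eventual pieces); this is harmless, and the paper glosses over the same point.
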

\proof
(1) As before, we shall provide a proof of the proposition for the elementary refinements discussed in Section \ref{sect:elementary_refinements}, and for the sake of simplicity, we assume that $\Lambda_i$ is the graph consisting of one edge $e$.
It is clear that $u(C^\infty(\spc{i})) \subseteq C^\infty(\spc{j})$.
Now let $\psi \in C^\infty(\spc{i})$, let $X \in \fg$, and let $(a_1, a_2) \in \sfG^{(0)}_j$.
\begin{itemize}
\item If $\Lambda_j$ is obtained from $\Lambda_i$ by adding the edge $e_2 \in \Lambda_j^1$ then $I_{j,(s(e_1),t(e_1))} = I_{i,(s(e),t(e))}$ and we have trivially that
$$
H_{j,0}(u(\psi))(a_1, a_2) = -\frac{1}{2} I_{j,(s(e_1),t(e_1))}  \Delta_{e_1} \psi(a_{1}) = 
\left( u \circ H_{i,0} (\psi) \right)(a_1,a_2). 
$$
\item If $\Lambda_j$ is obtained from $\Lambda_i$ by subdividing the edge $e \in \Lambda_i^1$ into the two edges $e_1$ and $e_2 \in \Lambda_j^1$  then $I_{j,(s(e_1),t(e_1))} +  I_{j,(s(e_2),t(e_2))} = I_{i,(s(e),t(e))}$ and 
\begin{align*}
&H_{0,j}(u(\psi))(a_1, a_2) \\
&\quad = -\frac{1}{2} \left( I_{j,(s(e_1),t(e_1))}\Delta_{e_1} (u(\psi))(a_{1}, a_{2}) + I_{j,(s(e_2),t(e_2))} \Delta_{e_2}( u(\psi)) (a_{1}, a_{2}) \right) \\
&\quad = -\frac{1}{2} \left( I_{j,(s(e_1),t(e_1))} +  I_{j,(s(e_2),t(e_2))} \right) \Delta_{e}
(\psi)(a_{1} a_{2}) \\
&\quad = -\frac{1}{2} I_{i,(s(e),t(e))} (u \circ \Delta_e(\psi))(a_1, a_2) \\
&\quad = (u \circ H_{0,i}(\psi))(a_1, a_2).
\end{align*}
using invariance of the Laplacian on $L^2(G)$ with respect to the left and right action of $G$ in going to the third line.

\end{itemize}
This proves commutativity of the diagram for the restrictions of the operators to the spaces of smooth functions.
The assertion now follows from the fact that $u$ is a bounded operator and the fact that $H_{0,i}$ and $H_{0,j}$ are the closures of their restrictions to $C^\infty(\spc{i})$ and $C^\infty(\spc{j})$, respectively.

(2) The inclusion is a consequence of the first part of this proposition, and the definition of $u_\red$.
Now let $p_i := p_{\cH_i^{\cG_i}}$, let $p_j := p_{\cH_j^{\cG_j}}$, and consider the following cube:
\[
\xymatrix{
& \dom(H_{0,j})\ar[rr]^{H_{0,j}}\ar'[d][dd]^(.45){p_j}& & \cH_j \ar[dd]^{p_j}\\
\dom(H_{0,i})\ar[dd]^{p_i}\ar[rr]^{H_{0,i}}\ar[ur]^{u} & & \cH_i \ar[ur]^{u} \ar[dd]^(.65){p_i}&  \\
& \dom(H_{0,j}) \cap \cH_j^{\cG_j} \ar'[r][rr]^(-.45){H_{0,j}^\red} & & \cH_j^{\cG_j} \\
\dom(H_{0,i}) \cap \cH_i^{\cG_i} \ar[rr]^{H_{0,i}^\red} \ar[ur]_{u^\red} & & \cH_i^{\cG_i} \ar[ur]_{u^\red} & 
}
\]

The top face is commutative by the previous part of the proposition.
The side faces of the cube are commutative by Proposition \ref{prop:refinement_on_Hilbert_spaces1}.
The front and rear faces of the cube are commutative by Proposition \ref{prop:reduction_of_the_electric_part_of_the_Hamiltonian}, and by the same proposition, the map $p_i \colon \dom(H_{0,i}) \rightarrow \dom(H_{0,i}) \cap \cH_i^{\cG_i}$ is surjective.
It follows that the bottom face of the cube is commutative, which is what we wanted to show.
\endproof

\begin{rem}
Because the edges of the graphs under consideration correspond to paths in space, and because we subdivide these paths into smaller paths when considering finer graphs, we can take the constant $I_e$ to be proportional to the length of the path associated to the edge $e$ for each $e \in \Lambda^1$.
In this way, Equation \eqref{eq:path_lengths} will be satisfied.
This generalizes the electric part of the Kogut-Susskind Hamiltonian found in \cite{kogut75}, which is proportional to the lattice spacing, to finite graphs.
\end{rem}

 \section{The continuum limit}
\label{sec:limit}

\noindent
We will now consider the continuum limit of our theory by considering the limit objects of the inverse and direct systems constructed in the previous section. This includes inverse limits of measure spaces and groupoids, and the direct limits of Hilbert spaces and (groupoid) \Cs algebras. In particular, we will identify a limit pair groupoid $\sfG_\infty$ for which the groupoid \Cs algebra $C^\ast(\sfG_\infty)$ is isomorphic to the limit of observable algebras $\varinjlim_{i \in I} A_i$.

First of all, the inverse system $(\spc{i}, \R_{i,j})$ in Lemma \ref{lem:inv_sys_spaces} has a limit in the category of topological spaces, which is unique up to unique isomorphism, and which can be realised as follows:
\begin{equation*}
\spc{\infty} = \varprojlim_{i \in I} \spc{i} :=  \left\lbrace  a=(a_i)_{i \in I} \in \prod_{i \in I} \spc{i} \ \vert \ a_i = \R_{i,j} (a_j) \ \mbox{for all} \ i \leq j  \right\rbrace 
\end{equation*}
together with maps
\[\R_{i,\infty}: \spc{\infty} \to \spc{i},\]
which are given by the projection. Note that since the maps $\R_{i,j}$ are not group homomorphism, the limit space $\spc{\infty}$ does not automatically possess a group structure.

By \cite[Lemma 1.1.10]{RZ2010}, since $\spc{\infty}$ is an inverse limit of compact Hausdorff spaces, the maps $\R_{i,\infty}$ are surjective for all $i \in I$. Moreover, since the spaces involved are compact, the maps $\R_{i,j}$ are automatically proper and so are the structure maps $\R_{i,\infty}$.
The existence of a measure on the limit space is then a consequence of Prokhorov's theorem (\cite[Theorem 21]{S73}): 

\begin{prop}
\label{prop:meas_lim_sp}
Let $\spc{\infty}$ denote the limit of the inverse system of measurable topological spaces $((\spc{i}, \mu_i), \R_{i,j})$. Then there exists a Radon measure ${\mu}_{\infty} $ on $\spc{\infty}$ such that $\R_{i,\infty} ({\mu}_{\infty})=\mu_i$.
\end{prop}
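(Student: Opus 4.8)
The plan is to construct the measure $\mu_\infty$ on $\spc{\infty}$ by viewing the inverse system $((\spc{i}, \mu_i), \R_{i,j})$ as a projective system of probability spaces (each $\mu_i$ is a normalized Haar measure, hence a Radon probability measure on a compact Hausdorff space) satisfying the consistency condition $(\R_{i,j})_\ast \mu_j = \mu_i$ from Lemma \ref{lem:inv_sys_spaces}. The goal is to produce a single Radon measure on the limit with the prescribed marginals $\R_{i,\infty,\ast}(\mu_\infty) = \mu_i$. First I would observe that, by the Riesz--Markov theorem, it suffices to define a positive normalized linear functional $\Phi$ on $C(\spc{\infty})$; the measure $\mu_\infty$ is then the associated Radon measure.

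Next I would exploit the fact that $\spc{\infty}$ is a compact Hausdorff space obtained as an inverse limit, so that the $\ast$-algebra $\bigcup_i \R_{i,\infty}^\ast(C(\spc{i}))$ of functions pulled back from the finite stages is dense in $C(\spc{\infty})$ by the Stone--Weierstrass theorem: it is a subalgebra containing the constants, closed under conjugation, and it separates points precisely because the projections $\R_{i,\infty}$ are jointly injective on the limit. On this dense subalgebra one defines
\begin{equation*}
\Phi(\R_{i,\infty}^\ast \phi) := \int_{\spc{i}} \phi \: \d\mu_i, \qquad \phi \in C(\spc{i}).
\end{equation*}
The consistency condition $(\R_{i,j})_\ast \mu_j = \mu_i$ guarantees this is well defined: if a function admits two representations coming from different indices $i$ and $j$, one passes to a common refinement $k \geq i,j$ (using that $I$ is directed) and uses $\int_{\spc{i}} \phi\,\d\mu_i = \int_{\spc{k}} \R_{i,k}^\ast \phi \,\d\mu_k$. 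Since each $\mu_i$ is a probability measure, $\Phi$ is positive and bounded of norm $1$, hence extends uniquely by continuity to all of $C(\spc{\infty})$.

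Finally, applying Riesz--Markov to $\Phi$ yields the desired Radon measure $\mu_\infty$, and the defining identity $\Phi(\R_{i,\infty}^\ast \phi) = \int_{\spc{i}} \phi\,\d\mu_i$ is exactly the statement that $\R_{i,\infty,\ast}(\mu_\infty) = \mu_i$ for every $i$. The role of Prokhorov's theorem (cited as \cite[Theorem 21]{S73}) is to secure the existence of the limit measure in the setting where one cannot simply invoke algebraic density — it furnishes tightness and hence the nontrivial limit object. I expect the main obstacle to be the well-definedness of $\Phi$ on the dense subalgebra, i.e.\ verifying that the value assigned to a pulled-back function is genuinely independent of which stage $i$ it is pulled back from; this is where the exactness of the inverse system of measures from Lemma \ref{lem:inv_sys_spaces} is essential and where the directedness of the index set $I$ must be used to reduce any two representatives to a common refinement.
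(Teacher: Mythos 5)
Your proof is correct, but it follows a genuinely different route from the paper. The paper's entire proof consists of the two observations made just before the statement --- the structure maps $\R_{i,\infty}$ are surjective by \cite[Lemma 1.1.10]{RZ2010} and proper since all spaces are compact --- followed by a citation of Prokhorov's theorem on projective limits of Radon measures, \cite[Theorem 21]{S73}, as a black box. You instead give a self-contained construction: the pulled-back functions $\bigcup_i \R_{i,\infty}^\ast(C(\spc{i}))$ form a dense $\ast$-subalgebra of $C(\spc{\infty})$ by Stone--Weierstrass, the functional $\Phi$ is well defined on it by directedness of $I$ together with the exactness $(\R_{i,j})_\ast\mu_j = \mu_i$ of Lemma \ref{lem:inv_sys_spaces}, and Riesz--Markov converts its continuous extension into the measure $\mu_\infty$, with $(\R_{i,\infty})_\ast\mu_\infty = \mu_i$ holding by construction. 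This buys transparency and elementarity: it uses only compactness and the fact that each $\mu_i$ is a probability measure, and no tightness argument is hidden. What the paper's citation buys is brevity and generality, since Prokhorov's theorem also covers non-compact situations where your density argument would need a genuine tightness hypothesis. Your closing remark about Prokhorov's theorem ``furnishing tightness'' is the one muddled point: in your own argument Prokhorov is not needed at all (tightness is automatic on compact spaces), whereas in the paper it \emph{is} the proof.

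One ingredient you should make explicit: well-definedness, positivity, and the norm bound $|\Phi(\R_{i,\infty}^\ast\phi)| \le \|\R_{i,\infty}^\ast\phi\|_\infty$ all require that the projections $\R_{k,\infty}$ be surjective, equivalently that $\R_{k,\infty}^\ast$ be isometric on $C(\spc{k})$. For instance, from $\R_{i,\infty}^\ast\phi = \R_{j,\infty}^\ast\psi$ and directedness you only deduce that $\R_{i,k}^\ast\phi$ and $\R_{j,k}^\ast\psi$ agree on the image of $\R_{k,\infty}$; surjectivity upgrades this to equality on all of $\spc{k}$, after which exactness gives equal integrals. Likewise $\R_{i,\infty}^\ast\phi \ge 0$ on $\spc{\infty}$ only forces $\phi \ge 0$ on the image of $\R_{i,\infty}$. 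This surjectivity is precisely what the paper records before the proposition (via \cite[Lemma 1.1.10]{RZ2010}, using that the bonding maps $\R_{i,j}$ are surjective), so the fact is available; your argument just needs to invoke it.
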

\noindent
By Proposition \ref{prop:Hilbert_space_refinements} we have a direct system of Hilbert spaces $(\cH_i, u_{i,j})$, where $\cH_i := L^2(\spc{i},\mu_i)$. Its direct limit is nothing but the space of $L^2$ functions on the inverse limit of the spaces of connections with respect to the inverse limit measure (cf. \cite{baez96}):
\begin{equation*}
{\cH}_{\infty}:= \varinjlim_{i \in I} \cH_i \simeq L^2(\spc{\infty}, {\mu}_{\infty}).
\end{equation*}
The following proposition, which relates the inverse limit of Hilbert spaces with the direct limit of their algebras of observables, is probably well-known to experts. For the sake of completeness, we include a proof. 
\begin{prop}[]
Let $((\cH_i, \inp{\cdot}{\cdot}_i), u_{i,j})$ be a direct system of Hilbert spaces such that each map $u_{i,j}$ is an isometry.
Let $(\cH_\infty ,\inp{\cdot}{\cdot})$ be its direct limit.
For each $i,j \in I$ with $i \leq j$, define the map $v_{i,j}$ by
\begin{equation*}
v_{i,j} \colon B_0(\cH_i) \rightarrow B_0(\cH_j), \quad 
a \mapsto u_{i,j} a u_{i,j}^\ast.
\end{equation*}
Then $v_{i,j}$ is injective $^\ast$-homomorphism, hence it is an isometry.
Furthermore, $(B_0(\cH_i), v_{i,j})$ is a direct system of \Cs algebras, and we have
\begin{equation*}
\varinjlim_{i \in I} B_0(\cH_i) \simeq B_0(\cH_\infty).
\end{equation*}
\label{prop:direct_limit_of_Hilbert_spaces_and_compact_operators}
\end{prop}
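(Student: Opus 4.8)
The plan is to dispatch the algebraic claims first and then to concentrate on identifying the direct limit. For the first claim, since $u_{i,j}$ is an isometry we have $u_{i,j}^\ast u_{i,j} = \Id_{\cH_i}$, so that $v_{i,j}(ab) = u_{i,j} a u_{i,j}^\ast u_{i,j} b u_{i,j}^\ast = v_{i,j}(a) v_{i,j}(b)$ and $v_{i,j}(a^\ast) = v_{i,j}(a)^\ast$; this is precisely the computation already carried out in Proposition \ref{prop:injective_star_homomorphisms}. Injectivity follows because $u_{i,j}^\ast v_{i,j}(a) u_{i,j} = a$ exhibits a left inverse, and an injective $\ast$-homomorphism between \Cs algebras is automatically isometric. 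The direct-system property $v_{i,k} = v_{j,k} \circ v_{i,j}$ is immediate from the corresponding identity $u_{i,k} = u_{j,k} u_{i,j}$ for the Hilbert spaces, exactly as in Proposition \ref{prop:dir-syst-compacts}.

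For the isomorphism I would first make $\cH_\infty$ concrete: the isometric bonding maps yield canonical isometries $u_{i,\infty} \colon \cH_i \to \cH_\infty$ with $u_{j,\infty} u_{i,j} = u_{i,\infty}$, and $\cH_\infty$ is the completion of the directed union $\bigcup_i u_{i,\infty}(\cH_i)$, so this union is dense. Setting $v_{i,\infty}(a) := u_{i,\infty} a u_{i,\infty}^\ast$ gives injective $\ast$-homomorphisms compatible with the $v_{i,j}$, so the universal property of $\varinjlim_i B_0(\cH_i)$ produces a $\ast$-homomorphism $v_\infty \colon \varinjlim_i B_0(\cH_i) \to B_0(\cH_\infty)$ with $v_\infty \circ \phi_i = v_{i,\infty}$, where $\phi_i$ denotes the canonical (isometric) map into the direct limit.

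Injectivity of $v_\infty$ is then soft: on each $\phi_i(B_0(\cH_i))$ the map $v_\infty$ coincides with the isometry $v_{i,\infty}$, hence is isometric on the dense subalgebra $\bigcup_i \phi_i(B_0(\cH_i))$ and therefore on all of $\varinjlim_i B_0(\cH_i)$ by continuity. The crux is surjectivity, which I would establish through rank-one operators. A direct computation shows that for $\tilde\xi, \tilde\eta \in \cH_i$ the map $v_{i,\infty}$ sends the rank-one operator $\zeta \mapsto \inp{\tilde\eta}{\zeta}\tilde\xi$ to the rank-one operator $\zeta \mapsto \inp{u_{i,\infty}\tilde\eta}{\zeta}\, u_{i,\infty}\tilde\xi$ on $\cH_\infty$, using $u_{i,\infty}^\ast u_{i,\infty} = \Id$ to move $u_{i,\infty}$ across the inner product. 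Since $\bigcup_i u_{i,\infty}(\cH_i)$ is dense in $\cH_\infty$ and $(\xi,\eta) \mapsto (\zeta \mapsto \inp{\eta}{\zeta}\xi)$ is norm-continuous, every rank-one operator on $\cH_\infty$ lies in the closure of $\bigcup_i v_{i,\infty}(B_0(\cH_i)) = v_\infty(\varinjlim_i B_0(\cH_i))$. As this image is closed (being the image of an isometry from a complete space), it contains the closed linear span of all rank-one operators, namely $B_0(\cH_\infty)$, so $v_\infty$ is onto and hence an isomorphism.

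The routine algebra and the injectivity are essentially already in hand from the earlier propositions; the genuine content is the surjectivity argument, whose main point is that the finite-stage compacts exhaust $B_0(\cH_\infty)$. The essential ingredients are the density of $\bigcup_i u_{i,\infty}(\cH_i)$ in $\cH_\infty$ (a feature of the Hilbert-space direct limit with isometric bonding maps) together with the fact that $B_0(\cH_\infty)$ is the closed span of its rank-one operators; I expect the only care needed is the elementary norm estimate allowing one to approximate an arbitrary rank-one operator by one supported at a finite stage.
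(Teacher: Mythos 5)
Your proposal is correct and follows essentially the same route as the paper's proof: embed each $B_0(\cH_i)$ into $B_0(\cH_\infty)$ by conjugation with $u_{i,\infty}$, and obtain the reverse inclusion by approximating an arbitrary rank-one operator $\theta_{\xi,\eta}$ on $\cH_\infty$ by rank-one operators with vectors in the dense union $\bigcup_i u_{i,\infty}(\cH_i)$, using the norm estimate for $(\xi,\eta)\mapsto\theta_{\xi,\eta}$. The only difference is presentational: you make the universal-property construction of the comparison map $v_\infty$ and its isometric (hence closed) image explicit, whereas the paper identifies $\varinjlim_i B_0(\cH_i)$ with the closure of the union inside $B_0(\cH_\infty)$ directly.
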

\begin{proof}
Since each $u_{i,j}$ is an isometry, each $\cH_i$ is embedded into $\cH_j$ for $i \leq j$ and it is further embedded in the direct limit $\cH$ via the maps $u_{i,\infty}$ that appear in the definition of direct limit.
 
The maps $u_{i,\infty}$ induce isometric *-homomorphisms $B_0(\cH_i) \to B_0(\cH)$ given by
$T \mapsto u_{i,\infty} T u_{i, \infty}^*$. Hence we have that the direct limit of the algebras of compact operators, which by injectivity of the structure maps $v_{i,j}$ is the closure of the union of the algebras, satisfies 
\begin{equation*}
\varinjlim_{i\in I} B_0(\cH_i)
= \overline{\bigcup_{i\in I} B_0(\cH_i)}
\subseteq B_0(\varinjlim_{i\in I}\cH_i).
\end{equation*}
To prove the reverse inclusion, recall that by definition $B_0(\cH_\infty)$ is the closure of the linear span of rank 1 operators on $\cH_\infty$, which we write as
$\theta_{\xi,\eta}$ with $\|\xi\|,\|\eta\| \leq 1$.
By definition of the direct limit we have that $\cH_\infty = \overline{\bigcup_{i\in I}\cH_i}$.
So for every $0 <\epsilon<1$ and $i \in I$ we can find $\xi', \eta' \in u_{i,\infty}(\cH_i)$ with $\| \xi -\xi'\| \leq \epsilon/3$ and $\| \xi -\xi'\| \leq \epsilon/3$, which implies that $\| \theta_{\xi,\eta}-\theta_{\xi',\eta'} \| \leq \epsilon$.
This means that the closed set $\varinjlim_{i \in I} B_0(\cH_i)$ contains all finite rank operators on $\cH$, hence it contains the whole of $B_0(\cH_\infty)$.
\end{proof}

\noindent
In our case of interest, we have:
\begin{cor}
\label{cor:dir-limit-observable}
The direct limit of the observable algebras is given by
\begin{equation*}
\varinjlim_{i\in I} A_i \simeq B_0(L^2(\spc{\infty})).
\end{equation*}
\end{cor}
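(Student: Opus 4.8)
The plan is to read this off directly from Proposition \ref{prop:direct_limit_of_Hilbert_spaces_and_compact_operators} by specialising its abstract setup to the concrete direct system at hand. First I would recall that, by definition, the observable algebra attached to the graph $\Lambda_i$ is $A_i = B_0(L^2(\spc{i})) = B_0(\cH_i)$ with $\cH_i = L^2(\spc{i}, \mu_i)$, so that the direct system $(A_i, v_{i,j})$ of Proposition \ref{prop:dir-syst-compacts} is precisely the direct system $(B_0(\cH_i), v_{i,j})$ with connecting maps $v_{i,j}(a) = u_{i,j}\, a\, u_{i,j}^\ast$ coming from Proposition \ref{prop:injective_star_homomorphisms}.

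Next I would verify the hypotheses of Proposition \ref{prop:direct_limit_of_Hilbert_spaces_and_compact_operators}. The connecting maps $u_{i,j}$ are isometries by Proposition \ref{prop:refinement_on_Hilbert_spaces1}, and by Proposition \ref{prop:Hilbert_space_refinements} the collection $(\cH_i, u_{i,j})$ is indeed a direct system of Hilbert spaces of exactly the form required. Applying Proposition \ref{prop:direct_limit_of_Hilbert_spaces_and_compact_operators} then yields
\begin{equation*}
\varinjlim_{i \in I} A_i = \varinjlim_{i \in I} B_0(\cH_i) \simeq B_0(\cH_\infty), \qquad \cH_\infty := \varinjlim_{i \in I} \cH_i .
\end{equation*}

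Finally, I would identify the limit Hilbert space. The existence of the inverse-limit Radon measure $\mu_\infty$ on $\spc{\infty}$ is guaranteed by Proposition \ref{prop:meas_lim_sp}, and the identification $\cH_\infty \simeq L^2(\spc{\infty}, \mu_\infty)$ is the one recorded just above Proposition \ref{prop:direct_limit_of_Hilbert_spaces_and_compact_operators} (following \cite{baez96}). Substituting this into the displayed isomorphism gives $\varinjlim_{i \in I} A_i \simeq B_0(L^2(\spc{\infty}))$, as claimed. Since every ingredient is already in place, I do not expect a genuine obstacle here; the only point requiring care is to confirm that the $^\ast$-homomorphisms $v_{i,j}$ arising from refinements really are the maps $a \mapsto u_{i,j}\, a\, u_{i,j}^\ast$ to which Proposition \ref{prop:direct_limit_of_Hilbert_spaces_and_compact_operators} applies, which is exactly the content of Proposition \ref{prop:injective_star_homomorphisms}.
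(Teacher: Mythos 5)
Your proposal is correct and follows exactly the route the paper intends: the corollary is stated as an immediate consequence of Proposition \ref{prop:direct_limit_of_Hilbert_spaces_and_compact_operators} applied to the direct system $(\cH_i, u_{i,j})$ with isometric connecting maps, combined with the identification $\cH_\infty \simeq L^2(\spc{\infty}, \mu_\infty)$ recorded just before it. Your write-up merely makes explicit the hypothesis-checking (Propositions \ref{prop:refinement_on_Hilbert_spaces1}, \ref{prop:Hilbert_space_refinements}, \ref{prop:injective_star_homomorphisms}, and \ref{prop:meas_lim_sp}) that the paper leaves implicit.
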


Next, we determine the inverse limit of the groupoids $\sfG_i$ and show that the direct limit \Cs algebra $\rmA_{\infty} = B_0(L^2(\spc{\infty}, \mu_{\infty}))$ agrees with the \Cs algebra $C^*(\sfG_{\infty})$ of the inverse limit groupoid $\sfG_{\infty}$. 

Given the simple structure of the groupoid morphisms $\sfR_{i,j}: \sfG_j \to \sfG_i$ one easily checks that the limit groupoid $\sfG_{\infty}$ is also a pair groupoid and is given by
\begin{equation*}
\sfG_{\infty} = \spc{\infty} \times \spc{\infty}.
\end{equation*}
It is by definition a free and transitive groupoid.

\begin{rem}
More generally, the limit of an inverse family of compact \emph{transitive} groupoids such that all groupoid homomorphisms are surjective is also transitive.
Moreover, for inverse families of compact \emph{free} groupoids, the limit is also a free groupoid.
The proofs rely on the fact that 
source and target in the limit groupoid are defined component-wise.
\end{rem}

\noindent
On the groupoid $\sfG_{\infty} = \spc{\infty} \times \spc{\infty}$ we have a natural Haar system given by 
\[\lbrace \mu \times \delta_x \ \vert \ x \in \spc{\infty} \rbrace,\] where $\delta_x$ is the unit point mass at $x$ and $\mu$ is a positive Radon measure on $\spc{\infty}$ of full support. 

\begin{thrm}
\label{thm:limits-reduced}
The groupoid \Cs algebra $C^*(\spc{\infty} \times \spc{\infty})$ is isomorphic to the limit observable algebra $A_\infty$, which in turn is isomorphic to $B_0(L^2(\spc{\infty},\mu))$, where $\mu$ is the injective limit of the measures on $\spc{i}$.
\end{thrm}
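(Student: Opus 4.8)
The plan is to prove the two asserted isomorphisms by routing both sides through the common \Cs algebra $B_0(L^2(\spc{\infty}, \mu_\infty))$, where $\mu_\infty$ is the Radon measure on the limit space furnished by Proposition \ref{prop:meas_lim_sp} (this is the measure denoted $\mu$ in the statement). The key observation is that one need \emph{not} show that the groupoid \Cs algebra functor commutes with the inverse limit of groupoids, which fails in general. Instead I would regard $C^\ast(\sfG_\infty)$ and $A_\infty$ as two a priori unrelated descriptions of the same algebra of compact operators, and identify each of them with $B_0(L^2(\spc{\infty}, \mu_\infty))$ independently before chaining the results.

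For the observable side, recall that by Theorem \ref{prop:refinement_of_grpd_alg} the direct system $(C^\ast(\sfG_i), \sfR_{i,j}^\ast)$ is isomorphic to the direct system $(B_0(L^2(\spc{i})), v_{i,j})$, so their direct limits both coincide with $A_\infty$; Corollary \ref{cor:dir-limit-observable} then supplies $A_\infty \simeq B_0(L^2(\spc{\infty}, \mu_\infty))$. For the groupoid side, I would invoke the general pair-groupoid identity \eqref{eq:groupoid-C*-compacts} applied to the locally compact Hausdorff space $X = \spc{\infty}$ equipped with the measure $\mu = \mu_\infty$ and the Haar system $\{\mu_\infty \times \delta_x \mid x \in \spc{\infty}\}$. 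Since $\sfG_\infty = \spc{\infty} \times \spc{\infty}$ is a pair groupoid, this yields at once $C^\ast(\spc{\infty} \times \spc{\infty}) \simeq C^\ast_{\mathrm{r}}(\spc{\infty} \times \spc{\infty}) \simeq B_0(L^2(\spc{\infty}, \mu_\infty))$. Combining the two isomorphisms gives $C^\ast(\sfG_\infty) \simeq A_\infty \simeq B_0(L^2(\spc{\infty}, \mu_\infty))$, as required.

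The main point requiring care is verifying that the hypotheses of \eqref{eq:groupoid-C*-compacts} genuinely hold for the limit data. That $\spc{\infty}$ is compact Hausdorff is immediate, being an inverse limit of compact Hausdorff spaces, and $\mu_\infty$ is Radon by Proposition \ref{prop:meas_lim_sp}; the delicate hypothesis is that $\mu_\infty$ has \emph{full} support, which is exactly what makes the representation by integral operators faithful. I would argue this as follows: writing $S := \supp(\mu_\infty)$, which is closed and hence compact, the relation $(\R_{i,\infty})_\ast \mu_\infty = \mu_i$ together with continuity of $\R_{i,\infty}$ forces $\supp(\mu_i) \subseteq \overline{\R_{i,\infty}(S)} = \R_{i,\infty}(S)$, the image being compact hence closed; since each $\mu_i$ is the full-support Haar measure, this gives $\R_{i,\infty}(S) = \spc{i}$ for every $i$. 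As $I$ is directed, every basic open neighbourhood of a point $a \in \spc{\infty}$ may be taken of the form $\R_{i,\infty}^{-1}(U_i)$ with $a_i \in U_i$; surjectivity of $S$ onto each $\spc{i}$ then produces an element of $S \cap \R_{i,\infty}^{-1}(U_i)$, so every such neighbourhood meets $S$, whence $a \in \overline S = S$ and therefore $S = \spc{\infty}$. With full support established, the identity \eqref{eq:groupoid-C*-compacts} applies verbatim and the theorem follows.
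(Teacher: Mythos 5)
Your proposal follows essentially the same route as the paper's proof: identify $A_\infty$ with $B_0(L^2(\spc{\infty},\mu_\infty))$ via Corollary \ref{cor:dir-limit-observable}, identify $C^\ast(\sfG_\infty) = C^\ast(\spc{\infty} \times \spc{\infty})$ with the same algebra via the pair-groupoid isomorphism \eqref{eq:groupoid-C*-compacts}, and chain the two. The only difference is that you also give a correct argument for the full-support property of $\mu_\infty$ (using $(\R_{i,\infty})_\ast\mu_\infty = \mu_i$, compactness of $\supp(\mu_\infty)$, and the form of basic open sets in the inverse limit), a hypothesis the paper asserts without proof.
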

\proof
Since the limit measure $\mu_{\infty}$ on the space $\spc{\infty}$ is a positive Radon measure of full support, the result follows from the second isomorphism in Equation \eqref{eq:groupoid-C*-compacts} and Corollary \ref{cor:dir-limit-observable} above.
\endproof

\begin{rem}The question whether the \Cs algebras associated with two Haar system on a given groupoid are isomorphic was answered positively by Muhly, Renault and Williams for the case of transitive groupoids (cf. \cite[Theorem 3.1]{MRW87}), of which pair groupoids are a special case.
Hence the choice of Haar system does not affect, in our setting, the structure of the groupoid \Cs algebra.
For a more in depth discussion on the dependence of the groupoid \Cs algebra on the choice of Haar system we refer the reader to \cite[Section~5]{buneci} and \cite[Section~3.1]{Pat99}.
\end{rem}

\noindent
Thus we have
\begin{equation*}
C^\ast(\sfG_\infty) = C^{\ast}(\varprojlim_{i\in I} \sfG_i) 
\simeq B_0(L^2(\spc{\infty},\mu))
\simeq \varinjlim_{i \in I} B_0(L^2(\spc{i}, \mu_i))
\simeq \varinjlim_{i \in I} C^\ast(\sfG_i),
\end{equation*}
justifying the idea that the quantized algebra of observables on the inverse limit $\spc{\infty}$ is the direct limit of the quantized algebras of observables on the spaces $(\spc{i})_{i \in I}$.
Intuitively, we may interpret the groupoid \Cs algebra $C^\ast(\sfG_\infty)$ as the quantization of the infinite-dimensional phase space $T^*\spc{\infty}$ (which we will not attempt to define in this paper).

Let us spend a few words on the free Hamiltonian in the continuum limit.
In fact, since the sequence of Hamiltonians $H_{0,i}$ on $\cH_i$ is compatible (in the sense of Proposition \ref{prop:red_Hamiltonian}) with the direct system of Hilbert spaces, it is not difficult to show that there is a limit operator $H_{0,\infty}$ on $\cH_\infty$ that is self-adjoint on a suitable domain $\dom(H_{0,\infty})$.
The spectral decomposition of this operator can then be shown to be well-behaved with respect to the spectral decompositions of each $H_{0,i}$.
In contrast with the spectral properties of each $H_{0,i}$, it is less clear what the summability properties of such an operator are, as for instance infinite multiplicities will appear.
We leave the analysis of the Hamiltonian in the limit for future work. 

\begin{rem}
The dynamics of the above quantum lattice gauge theory (and including fermions) in the (thermodynamic) limit is also the subject of \cite{GR15}.
There, the authors do not study the limit of the Hamiltonians directly, but rather focus on the limit of the one-parameter subgroups generated by the (interacting) Hamiltonians at all finite levels. 
\end{rem}

\subsection{Quantum gauge symmetries and the continuum limit}
We finish this paper by discussing the reduction of the quantum system in the limit. 

By equivariance of the maps involved in the refinement procedure as described in Equation \eqref{eq:commSpaces}, the results of Proposition \ref{prop:meas_lim_sp} holds \emph{verbatim} for the inverse family of quotient measure spaces with respect to the action of the gauge group. If we let $\spc{\infty}^{\red}$ denote the limit of the inverse system of topological measure spaces $((\cG_i \backslash \spc{i}, \mu^{\red}_i), (\Rr_{i,j})_{i,j \in I})$, then there exists a Radon measure ${\mu}^{\red}_{\infty} $ on $\spc{\infty}^{\red}$ such that $\Rr_{i,\infty} ({\mu}_{\infty}^{\red})=\mu_i^{\red}$. 

Next, we can consider the space of square integrable functions on $\spc{\infty}^{\red}$ with respect to the Radon measure ${\mu}^{\red}_{\infty}$. Then the direct limit of the direct system of Hilbert spaces $(L^2(\cG_j \backslash \spc{j}),u^\red_{j,k})$ of Proposition \ref{prop:Hilbert_space_refinements} is given by
\[\cH^{\red}_{\infty} \simeq  L^2(\spc{\infty}^\red, \mu_{\infty}^{\red}).\]
An application of Proposition \ref{prop:direct_limit_of_Hilbert_spaces_and_compact_operators} yields 
\begin{equation*}
\varinjlim_{i \in I} B_0(\cH_i) \simeq B_0(\cH_{\infty}^{\red}).
\end{equation*}
As in the previous section we may then infer that the underlying groupoid for the observable algebra of the reduced quantum system is a direct limit of pair groupoids so that 
\begin{equation*}
C^*(\sfG_{\infty}^{\red})\simeq B_0(L^2(\spc{\infty}^\red,\mu^\red_{\infty})) .
\end{equation*}
In other words, we have arrived at the reduced analogue of Theorem \ref{thm:limits-reduced}.

\subsection{Outlook}

\noindent 
For the quantization of the configuration space we have followed the approach of \cite{landsman95} and defined the quantized algebra of observables as a groupoid \Cs algebra. The merit of this approach is that it is fully compatible with the natural maps between configuration spaces induced by graph refinements. Hence it allowed us to concretely describe the observable algebras in both the continuum and the themodynamic limit. 
Moreover, in the case of the thermodynamic limit, our algebra of observables agrees with the infinite tensor product algebra $\mathcal{L}$ considered in \cite{grundling13} (where each projection $P_k$ is the orthogonal projection onto the space of constant functions on $G$).

However, when we want to extend the above kinematical description of the limiting quantum gauge system to incorporate the Hamiltonian dynamics for the interacting system, we run into the following problems.
Namely, since our limit observable algebra is given by the space of compact operators, it does not really capture the infinite number of degrees of freedom that one would expect for an interacting quantum field theory (cf. \cite{Yng04} for a nice overview of this point), or in the description of the statistical physics of an infinite system at finite temperature \cite{AW63}.
As such, our limit observable algebra only admits KMS-states that are associated to inner automorphisms of the algebra, which prompts the question whether it is the right algebra for the description of a nontrivial quantum field theory.

The reason for this lack of interesting states might be that even though our choice of maps between configuration spaces is natural from a \emph{classical} point of view, the induced maps 
$v_{i,j}$ between the different observable algebras defined in Proposition \ref{prop:refinement_on_observable_algebras1} do not induce maps between the  state spaces of the algebras.


It is in this context interesting to mention that there are other approaches to the construction of the limit observable algebra, one of which was developed by Kijowski in \cite{kijowski77}, and later by Oko\l{}\'ow in \cite{okolow13} (cf. \cite{kijowski16}), and recently explored in depth by Lan\'ery and Thiemann in a series of papers \cite{lanery14-I,lanery14-II,lanery14-III,lanery15-IV}, see \cite{lanery16} for a comprehensive overview of these papers.
The main point where their approach differs from ours, is that they assume of the existence of a canonical unitary map between Hilbert spaces, which they use to define injective $^\ast$-homomorphism between the corresponding algebras of bounded operators, and which ensures that the transpose of this homomorphism maps states to states, i.e. preserves the normalization of positive functionals.
However, in their approach, the maps at the level of bounded operators do not reduce to maps between the algebras of compact operators, thus forcing them to abandon the setting of \Cs algebraic quantization described in e.g. \cite{landsman98}.

Yet another approach is suggested by Grundling and Rudolph in \cite{GR15}. For the thermodynamic limit of lattice QCD, they identify as the observable algebra a \Cs algebra that is larger than the above kinematical \Cs algebra, but that is closed under a global time evolution generated by the (local) Hamiltonians. It would be interesting to see whether such a time evolution exists for the continuum limit as well, but we leave this and other questions for future research.

\end{document}